\tikzstyle{vertex}=[circle, draw, inner sep=0pt, minimum size=4pt]
\newcommand{\vertex}{\node[vertex]}
\newtheorem{thm}{Theorem} 
\newtheorem{cor}[thm]{Corollary}
\newtheorem{lemma}[thm]{Lemma}
\newcommand{\bra}[1]{\langle #1 |}
\newcommand{\ket}[1]{| #1 \rangle}
\newcommand{\braket}[2]{\langle #1 | #2 \rangle}
\newcommand{\ketbra}[2]{| #1 \rangle\langle #2 |}
\newcommand{\bb}[1]{\mathbb{#1}}
\renewcommand*\env@matrix[1][c]{\hskip -\arraycolsep
  \let\@ifnextchar\new@ifnextchar
  \array{*\c@MaxMatrixCols #1}}
\begin{document}

\begin{frontmatter}


\title{The Minimum Size of Unextendible Product Bases \\ in the Bipartite Case (and Some Multipartite Cases)}
\author[UG,IQC,AMSS]{Jianxin Chen}
\ead{chenkenshin@gmail.com}

\author[IQC]{Nathaniel Johnston}
\ead{nathaniel.johnston@uwaterloo.ca}

\address[UG]{Department of Mathematics \& Statistics, University of Guelph, Guelph, Ontario N1G~2W1, Canada}
\address[IQC]{Institute for Quantum Computing, University of Waterloo, Waterloo, Ontario N2L~3G1, Canada}
\address[AMSS]{UTS-AMSS Joint Research Laboratory for Quantum Computation and Quantum Information Processing, Academy of Mathematics and Systems Science, Chinese Academy of Sciences, Beijing, China}

\begin{abstract}
	A long-standing open question asks for the minimum number of vectors needed to form an unextendible product basis in a given bipartite or multipartite Hilbert space. A partial solution was found by Alon and Lov\'{a}sz in 2001, but since then only a few other cases have been solved. We solve all remaining bipartite cases, as well as a large family of multipartite cases.
\end{abstract}

\begin{keyword}
unextendible product basis \sep quantum entanglement \sep graph factorization

\MSC 81P40 \sep 05C90 \sep 81Q30

\end{keyword}

\end{frontmatter}

\section{Introduction}

We use ``ket'' notation to represent unit vectors in complex Euclidean space, which represent pure quantum states. We say that a state $\ket{v} \in \bb{C}^{d_1} \otimes \cdots \otimes \bb{C}^{d_p}$ is a \emph{product state} if we can write it in the form
\begin{align*}
	\ket{v} = \ket{v_1} \otimes \cdots \otimes \ket{v_p} \ \ \text{ with } \ \ \ket{v_j} \in \bb{C}^{d_j} \ \forall \, j.
\end{align*}
If no such decomposition of $\ket{v}$ exists, we say that it is \emph{entangled}.

An \emph{unextendible product basis (UPB)}, introduced in \cite{BDMSST99,DMSST03}, is a set of mutually orthogonal product states such that every state in the orthogonal complement of their span is entangled. That is, a set $S \subseteq \bb{C}^{d_1} \otimes \cdots \otimes \bb{C}^{d_p}$ is a UPB if and only if:
\begin{enumerate}[(a)]
	\item $\ket{v}$ is a product state for all $\ket{v} \in S$;
	
	\item $\braket{v}{w} = 0$ for all $\ket{v} \neq \ket{w} \in S$; and
	
	\item for all product states $\ket{z} \notin S$, there exists $\ket{v} \in S$ such that $\braket{v}{z} \neq 0$.
\end{enumerate}

Many applications of UPBs have been found, including the construction of bound entangled states \cite{BDMSST99,LSM11,Sko11b} and indecomposible positive maps \cite{Ter01}. Unextendible product bases have also been shown to give rise to Bell inequalities without a quantum violation \cite{ASHKLA11} and feature the interesting property of nonlocality without entanglement \cite{BDFMRSSW99} -- that is, they can not be perfectly distinguished by local quantum operations and classical communication, even though they contain no entanglement.

One of the oldest questions concerning UPBs asks for their minimum size -- the smallest number of vectors that form a UPB.  A trivial lower bound of $f_N(d_1,\ldots,d_p) := \sum_{j=1}^p(d_j - 1) + 1$ was immediately noted in \cite{BDMSST99}, and it was also shown that this lower bound holds in many cases. In fact, if we remove the orthogonality condition~(b) in the definition of a UPB, then this lower bound is always attained \cite{Bha06}. However, when condition~(b) is present (which is the case we consider), the problem seems to be more difficult. It was shown in \cite{AL01} that there are cases when the lower bound $f_N(d_1,\ldots,d_p)$ is not attained, and furthermore it was determined exactly for which values of $d_1,\ldots,d_p$ this is the case.

The minimal size of UPBs has only been determined in a handful of cases when the trivial lower bound is not attained. Our main contribution to this problem is a solution for a large family of systems, which includes all bipartite (i.e., $p = 2$) systems as a special case. We review known partial answers to this question in Section~\ref{sec:min_size} before presenting our main results. We also briefly introduce our proof technique, which is based on factorizations of the complete graph and tools from algebraic geometry. In Section~\ref{sec:proof} we present the full proof of our main result, and in Section~\ref{sec:tripartite} we prove a related result that answers the question for some additional tripartite (i.e., $p = 3$) systems. As our proofs are largely non-constructive, we discuss how to explicitly construct UPBs of the minimal size in Section~\ref{sec:explicit}. In order to illustrate these procedures, we provide MATLAB code and several examples in small dimensions. We close in Section~\ref{sec:conclusions} with a brief discussion of related questions that remain open.

\section{Minimum Size of Unextendible Product Bases}\label{sec:min_size}

A long-standing open question asks: given the dimensions $d_1, d_2, \ldots, d_p$, what is the minimum possible number of vectors in a UPB $S \subseteq \bb{C}^{d_1} \otimes \cdots \otimes \bb{C}^{d_p}$? We define $f_m(d_1,\ldots,d_p)$ to be this minimum value. Here we briefly list all partial answers to this question that are known to us:
\begin{enumerate}[(1)]
	\item If ${\rm min}(d_1,d_2) = 2$ then $f_m(d_1,d_2) = d_1 d_2$.
	
	\item $f_m(d_1,\ldots,d_p)$ equals the trivial lower bound $f_N(d_1,\ldots,d_p) := \sum_{j=1}^p(d_j - 1) + 1$ if and only if (1) doesn't hold and either $f_N(d_1,\ldots,d_p)$ is even or all $d_j$'s are odd (or both).

	\item If $p \equiv 2 \, (\text{mod } 4)$ then $f_m(2,2,\ldots,2) = p + 2$ \cite{Fen06}.

	\item $f_m(4,4) = 8$ \cite{Ped02}, $f_m(2,2,3) = 6$, $f_m(2,2,5) = 8$, $f_m(2,2,2,2) = 6$, $f_m(2,2,2,4) = 8$, and $f_m(2,2,2,2,5) = 10$ \cite{Fen06}.
\end{enumerate}

Notice in particular that in all known cases except for (1) and (2) above, we have $f_m(d_1,\ldots,d_p) = f_N(d_1,\ldots,d_p) + 1$. Our main result shows that this is a fairly general phenomenon, and there is a rather large class of multipartite systems for which $f_m(d_1,\ldots,d_p) = f_N(d_1,\ldots,d_p) + 1$.
\begin{thm}\label{thm:main}
	Let $2 \leq d_1 \leq d_2 \leq \ldots \leq d_p$ be integers for which neither $(1)$ nor $(2)$ above hold, and $d_p - 1 \geq \sum_{j=1}^{p-1}(d_j - 1) \geq 3$. Then $f_m(d_1,\ldots,d_p) = f_N(d_1,\ldots,d_p) + 1$.
\end{thm}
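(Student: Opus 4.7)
\textbf{Lower bound.} The bound $f_m \geq f_N + 1$ is immediate from item~(2) of Section~\ref{sec:min_size}: the hypotheses state that neither (1) nor (2) holds, and (2) is (given the failure of (1)) the precise characterization of when $f_m = f_N$, so $f_m > f_N$.

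\textbf{Upper bound: combinatorial setup.} I construct a UPB of size $n := f_N + 1 = D + 2$ where $D := \sum_j(d_j - 1)$. Since (1) and (2) both fail, $f_N = D+1$ is odd and some $d_j$ is even; in particular $D$ is even, so $n$ is even. Following the Alon--Lov\'asz framework, I encode a UPB by graphs $G_1,\ldots,G_p$ on $\{1,\ldots,n\}$, where $ik \in E(G_j)$ iff the $j$-th tensor factors of the $i$-th and $k$-th UPB states are orthogonal; the mutual orthogonality of the UPB elements forces $G_1 \cup \cdots \cup G_p = K_n$. Because $n$ is even, $K_n$ admits a $1$-factorization into $n - 1 = D + 1$ perfect matchings. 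I allocate $d_j - 1$ matchings to party $j$ for each $j < p$ and the remaining $d_p$ matchings to party $p$; let $G_j$ be the union of its allocated matchings. Writing $D_{p-1} := \sum_{j<p}(d_j - 1)$, each $G_j$ with $j < p$ is $(d_j - 1)$-regular and therefore has clique number at most $d_j$. The $d_p$-regular graph $G_p$ cannot contain $K_{d_p+1}$: such a clique would form an entire connected component (its vertices already having full degree inside), forcing $n \geq 2(d_p + 1)$; but the hypothesis $d_p - 1 \geq D_{p-1}$ gives $n = D_{p-1} + d_p + 1 \leq 2 d_p$.

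\textbf{Upper bound: generic UPB property.} Given the $G_j$, I pick vectors $\ket{v_j^{(i)}} \in \bb{C}^{d_j}$ realizing the orthogonality pattern of $G_j$ and form the product states $\ket{v^{(i)}} := \ket{v_1^{(i)}} \otimes \cdots \otimes \ket{v_p^{(i)}}$. Unextendibility is equivalent to the following: for every partition $S_1 \sqcup \cdots \sqcup S_p = \{1,\ldots,n\}$, at least one set $\{\ket{v_j^{(i)}} : i \in S_j\}$ spans $\bb{C}^{d_j}$---otherwise one could construct an extending product state $\ket{z_1}\otimes\cdots\otimes\ket{z_p}$ with $\ket{z_j}$ orthogonal to the party-$j$ vectors indexed by $S_j$. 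Pigeonhole handles the combinatorial side: $\sum_j |S_j| = n = D + 2 > D = \sum_j(d_j-1)$ forces some $|S_j| \geq d_j$. For a generic realization of the $G_j$, such $|S_j| \geq d_j$ vectors in $\bb{C}^{d_j}$ do span, giving the UPB property on a Zariski-open subset of the realization variety.

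\textbf{Main obstacle.} The bulk of the work is the algebraic-geometry step that makes ``generic'' rigorous. Let $V$ denote the variety of vector tuples $(\ket{v_j^{(i)}})$ realizing the prescribed orthogonalities. I must establish (i) $V$ is non-empty, which is not automatic from the clique bound $\omega(G_j) \leq d_j$ alone since the orthogonal rank of a graph can strictly exceed its clique number---the critical case is the $d_p$-regular $G_p$, which seems to require a separate dimension count or an explicit realization; and (ii) for each partition $(S_j)$ and each $j$ with $|S_j| \geq d_j$, the non-spanning locus of $\{\ket{v_j^{(i)}} : i \in S_j\}$ is a \emph{proper} closed subvariety of $V$, so that the union over the finitely many partitions still has non-empty complement. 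Both steps plausibly use the auxiliary hypothesis $D_{p-1} \geq 3$ to rule out degenerate low-dimensional regimes (such as $d_1 = d_2 = 2$) that are covered separately in the list of known results. Once (i) and (ii) are established, the complement of the non-UPB locus is a non-empty Zariski-open subset of $V$, and any point in it yields the desired UPB of size $f_N + 1$.
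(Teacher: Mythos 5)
Your overall framework (lower bound from item~(2), a decomposition of $K_n$ into graphs $G_1,\ldots,G_p$ with $G_j$ of degree $d_j-1$ for $j<p$ and degree $d_p$ for the last party, then a generic realization plus a counting argument) is the same strategy the paper follows, but as written the proposal has two genuine gaps. First, your genericity claim is false for the last party: since $G_p$ is $d_p$-regular, if $S_p$ is exactly the neighbourhood of a vertex $u$, then the $d_p$ vectors indexed by $S_p$ are all orthogonal to $\ket{v_p^{(u)}}$ and hence lie in a hyperplane of $\bb{C}^{d_p}$; they can \emph{never} span, in any realization, generic or not. So ``pigeonhole gives some $|S_j|\geq d_j$, and generically those vectors span'' does not close the unextendibility argument. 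The correct (and the paper's) statement for party $p$ is that any $d_p+1$ of its vectors span $\bb{C}^{d_p}$, combined with the count that a product state can be orthogonal to at most $d_p$ elements on party $p$ and at most $d_j-1$ on each party $j<p$, i.e., at most $f_N < f_N+1$ elements in total; equivalently, when $|S_p|\leq d_p$ you must push the pigeonhole onto the first $p-1$ parties using $n-d_p=\sum_{j<p}(d_j-1)+1$.

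Second, and more importantly, what you label the ``main obstacle''---non-emptiness of the realization variety together with properness of every non-spanning locus---is precisely the substance of the paper's proof, and you leave it unproven. The paper does not work with an arbitrary $1$-factorization: it fixes very specific graphs, namely $D_{d_p,b}$ (two copies of $K_{d_p-b}$ joined by $b+1$ shifted perfect matchings, with $b=\big((d_p-1)-\sum_{j<p}(d_j-1)\big)/2$) for party $p$, realized by standard basis vectors together with the columns of a $d_p\times(d_p-b)$ isometry with a prescribed zero pattern, and circulant bipartite $(d_j-1)$-regular graphs $C_{d_p,b,d_j-1,s_j}$ for the other parties. This structure is exactly what makes the genericity step provable: for each minor that must be nonsingular one exhibits an explicit permutation/column-permutation matrix compatible with the orthogonality constraints, so each bad locus is proper and of measure zero (Lemmas~\ref{lem:hollow_unitary} and~\ref{lem:cp_orthogonality}). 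With an arbitrary allocation of matchings you would have to prove the analogous minor-nonsingularity statements for essentially arbitrary regular graphs, which is harder and is not attempted; and the hypotheses $d_p-1\geq\sum_{j<p}(d_j-1)\geq 3$ enter concretely through the parameter $b\geq 0$ and the requirement $d_p\geq 4+2b$ in those lemmas, not merely to ``rule out degenerate regimes'' as you suggest.
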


This result generalizes the known results $f_m(4,4) = f_m(2,2,2,4) = 8$ and $f_m(2,2,2,2,5) = 10$. More importantly, Theorem~\ref{thm:main} solves all remaining bipartite cases. The following corollary follows immediately from Theorem~\ref{thm:main} and the known results (1) and (2).
\begin{cor}
	In the $p = 2$ case, we have the following characterization:
	\begin{enumerate}[(a)]
		\item If ${\rm min}(d_1,d_2) = 2$ then $f_m(d_1,d_2) = d_1 d_2$.
		
		\item If $d_1, d_2 \geq 4$ are even then $f_m(d_1,d_2) = d_1 + d_2$.
		
		\item If neither (a) nor (b) hold then $f_m(d_1,d_2) = d_1 + d_2 - 1$.
	\end{enumerate}
\end{cor}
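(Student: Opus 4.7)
The plan is a direct case analysis that combines the known results (1) and (2) with Theorem~\ref{thm:main}. Part (a) is literally the statement of (1), so there is nothing to do for it; the substance of the corollary lies in showing that in every remaining bipartite case one of (2) or Theorem~\ref{thm:main} applies.

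For part (c), I would first unpack the hypothesis ``neither (a) nor (b) holds'': this forces $\min(d_1,d_2)\ge 3$, and it is not the case that $d_1,d_2\ge 4$ are both even. Since a dimension equal to $3$ is automatically odd, and the only way to fail (b) while keeping both dimensions at least $4$ is to have at least one of them odd, in every remaining sub-case at least one of $d_1,d_2$ is odd. I would then split on parity: if both are odd, then (2) applies because ``all $d_j$'s are odd''; if exactly one is odd, then $f_N(d_1,d_2)=d_1+d_2-1$ is even, so (2) applies for that reason. Either way $f_m(d_1,d_2)=f_N(d_1,d_2)=d_1+d_2-1$.

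For part (b), assume $d_1,d_2\ge 4$ are both even (and, by the standing convention, $d_1\le d_2$). Then (1) fails, $f_N=d_1+d_2-1$ is odd, and the $d_j$'s are not all odd, so the ``only if'' direction of (2) gives the strict inequality $f_m(d_1,d_2)>f_N(d_1,d_2)$. To pin the exact value down, I would invoke Theorem~\ref{thm:main}: its numerical hypothesis $d_p-1\ge \sum_{j=1}^{p-1}(d_j-1)\ge 3$ collapses in the bipartite case to $d_2-1\ge d_1-1\ge 3$, which is immediate from $d_1\le d_2$ and $d_1\ge 4$. Hence $f_m(d_1,d_2)=f_N(d_1,d_2)+1=d_1+d_2$.

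The derivation is essentially bookkeeping once Theorem~\ref{thm:main} is in hand, so the ``main obstacle'' has already been pushed into that theorem; the only step in the corollary itself that requires any thought is verifying that the dimensional hypothesis of Theorem~\ref{thm:main} really is met in case (b), and the only tempting pitfall is forgetting that (2) is a biconditional, so that both directions (``$f_m=f_N$'' in case (c) and ``$f_m>f_N$'' in case (b)) are used.
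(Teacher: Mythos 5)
Your proposal is correct and matches the paper, which simply notes that the corollary follows immediately from results (1), (2), and Theorem~\ref{thm:main}; your case analysis (both-odd versus exactly-one-odd in case (c), and the verification that $d_2-1\ge d_1-1\ge 3$ in case (b)) is exactly the bookkeeping the paper leaves implicit.
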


Note that the only way that the condition $\sum_{j=1}^{p-1}(d_j - 1) \geq 3$ of Theorem~\ref{thm:main} can fail is if either $p = 2$ and $d_1 \leq 3$ (in which case the size of the minimum UPB is given by either condition (1) or (2) above), or if $p = 3$ and $d_1 = d_2 = 2$. Thus, the theorem says nothing about the values of $f_m(2,2,2k+1)$ for $k \geq 3$. The following result solves half of these remaining cases:
\begin{thm}\label{thm:tripartite_filler}
	Let $k$ be a positive integer. Then $f_m(2,2,4k+1) = f_N(2,2,4k+1) + 1$.
\end{thm}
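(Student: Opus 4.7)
The lower bound $f_m(2,2,4k+1) \geq 4k+4$ is immediate from item~(2) in Section~\ref{sec:min_size}: item~(1) is inapplicable because $p=3$, and $f_N(2,2,4k+1) = 4k+3$ is odd while $d_1 = d_2 = 2$ is even, so neither clause of the criterion in~(2) holds and hence $f_m > f_N$.

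For the upper bound I would construct a UPB of size $n := 4k+4 = 4(k+1)$. Partition $[n]$ into $k+1$ blocks of four, $B_1, \ldots, B_{k+1}$, and take the orthogonality graph $G_A$ on the first $\mathbb{C}^2$ factor to be $\bigsqcup_{j=1}^{k+1} K_{2,2}(B_j)$; this is realisable in $\mathbb{C}^2$ by assigning within each block two orthogonal directions, each to two of the four vertices, yielding parallel classes of size exactly $2$. I would take $G_B$ on the second $\mathbb{C}^2$ factor to be a perfect matching of $K_n$ whose edges all go between distinct blocks, and choose the $|b_i\rangle$ pairwise non-parallel. The remaining orthogonalities form $G_C := K_n \setminus (G_A \cup G_B)$, to be realised in $\mathbb{C}^{4k+1}$ by generically chosen $|c_i\rangle$'s. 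Crucially, $\overline{G_C} = G_A \cup G_B$ is connected, because the cross-block matching $G_B$ bridges the blocks.

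To verify unextendibility, suppose $|a\rangle|b\rangle|c\rangle$ were orthogonal to every $|a_i\rangle|b_i\rangle|c_i\rangle$, and define $S_A = \{i : \langle a|a_i\rangle = 0\}$, $S_B = \{i \notin S_A : \langle b|b_i\rangle = 0\}$, $S_C = [n] \setminus (S_A \cup S_B)$. A nonzero vector in $\mathbb{C}^2$ is orthogonal to exactly one parallel class of the $|a_i\rangle$'s, and each such class has size $2$, so $|S_A| \leq 2$; similarly $|S_B| \leq 1$ because the $|b_i\rangle$'s are pairwise non-parallel. In every case, writing $S_A \subseteq P_\ell$ and $S_B \subseteq Q_m$ for parallel classes $P_\ell, Q_m$, one checks $|P_\ell \cup Q_m| \leq 3$, so $|S_C| \geq n - 3 = 4k+1$. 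If every $(4k+1)$-subset of the $|c_i\rangle$'s spans $\mathbb{C}^{4k+1}$, no nonzero $|c\rangle$ can be orthogonal to all of $\{|c_i\rangle\}_{i \in S_C}$, a contradiction.

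The main obstacle is the algebraic-geometric fact invoked in the previous paragraph: that there exist $|c_i\rangle \in \mathbb{C}^{4k+1}$ satisfying the $G_C$-orthogonality pattern and in sufficiently general position that every $(4k+1)$-subset spans the full space. A parameter count shows that the variety of valid tuples has positive dimension, and the connectedness of $\overline{G_C}$ rules out the pathology where the $|c_i\rangle$'s are forced into orthogonal subspace blocks (which would prevent some $(4k+1)$-subsets from spanning). Verifying that the generic-position Zariski-open condition is non-empty on this variety is where tools of the type used in the proof of Theorem~\ref{thm:main} are applied. The parity restriction $n \equiv 0 \pmod 4$, i.e.\ $d_3 = 4k+1$, is precisely what enables the clean $K_{2,2}$-block decomposition of $G_A$; the case $(2,2,4k+3)$, with $n \equiv 2 \pmod 4$, obstructs this construction and presumably requires a different approach.
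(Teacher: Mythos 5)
Your construction is, in substance, the paper's own proof seen per qubit: the paper's set $S^{(1,2)}$ has exactly your orthogonality graphs on the two qubit factors (disjoint copies of $K_{2,2}$ on blocks of four on the first qubit, a cross-block perfect matching on the second, their union being the prism graph $Y_{4k+4}$ of Figure~\ref{fig:tripartite_Y}), and the third factor carries the complement graph, to be realized by generically chosen vectors in $\mathbb{C}^{4k+1}$. Your lower-bound argument and the counting $2+1+4k = 4k+3 < 4k+4$ are correct and match the paper's.

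The genuine gap is the step you yourself flag as the main obstacle: the existence of vectors $\ket{c_i} \in \mathbb{C}^{4k+1}$ realizing the $G_C$-orthogonality pattern such that every $(4k+1)$-subset spans. The justification you offer --- a parameter count showing the constraint variety has positive dimension, plus connectedness of $\overline{G_C} = G_A \cup G_B$ --- does not establish this: positive dimension says nothing about whether each determinantal locus $\{\det = 0\}$ meets the variety in a \emph{proper} subset, and connectedness of the complement graph is not the relevant condition for the spanning property (it rules out one obvious pathology but does not show the Zariski-open condition is nonempty). What actually closes this, and what the paper does in Lemma~\ref{lem:tripartite_matrix} (mirroring Lemma~\ref{lem:cp_orthogonality}), is an explicit witness for each choice of $4k+1$ columns: declare those columns to be the standard basis of $\mathbb{C}^{4k+1}$ and fill the remaining three columns one at a time with standard basis vectors compatible with the orthogonality pattern. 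This is possible precisely because $G_C$ is $4k$-regular (each vertex has degree $2$ in $G_A$ and $1$ in $G_B$), so each column is constrained by only $4k = d-1$ others and some basis direction always remains free. Each bad locus therefore intersects the constraint variety in a measure-zero set, a finite union of which is still measure zero, so a generic realization has all $(4k+1)\times(4k+1)$ minors nonsingular. Since your $G_C$ has the same regularity as the paper's $X_{4k+4}$, this argument applies verbatim to your graphs; but as written, your proposal asserts the key lemma rather than proving it, and the heuristic offered in its place would not survive as a proof.
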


The next two sections are devoted to proving Theorems~\ref{thm:main} and~\ref{thm:tripartite_filler}. Our techniques are very much inspired by the orthogonality graph and graph factorization methods introduced and developed in \cite{BDMSST99,DMSST03,Fen06}.

We illustrate the basic ideas briefly here, before providing the full proofs in the next sections. In all cases, condition (2) above tells us that $f_m(d_1,\ldots,d_p) \geq f_N(d_1,\ldots,d_p) + 1$, so it suffices to construct a UPB of size $f_N(d_1,\ldots,d_p) + 1$. For now, assume for simplicity that we are trying to construct a UPB $S := \{\ket{v_0},\ldots,\ket{v_5},\ket{w_0},\ldots,\ket{w_5}\} \in \bb{C}^6 \otimes \bb{C}^6$. For all $0 \leq j \leq 5$, we can write $\ket{v_j} = \ket{v_j^{(1)}} \otimes \ket{v_j^{(2)}}$ and $\ket{w_j} = \ket{w_j^{(1)}} \otimes \ket{w_j^{(2)}}$, and we know that any two of these product states are orthogonal on at least one of their subsystems (e.g., for all $i\neq j$ we have either $\braket{v_i^{(1)}}{v_j^{(1)}} = 0$ or $\braket{v_i^{(2)}}{v_j^{(2)}} = 0$, or both).

In order to easily keep track of these orthogonality conditions, we use \emph{orthogonality graphs}. All graphs that we consider will be simple graphs. The orthogonality graph of a set of vectors $\{\ket{\phi_0},\ldots,\ket{\phi_{k-1}}\}$ is the graph $(V,E)$ defined by letting $V := \{\phi_0,\ldots,\phi_{k-1}\}$ be a set of $k$ vertices, and $E := \{ (\phi_i,\phi_j) : \braket{\phi_i}{\phi_j} = 0 \}$ be a set of edges so that two vertices are adjacent if and only if the corresponding vectors are orthogonal.

Our goal now is to construct sets of vectors $S^{(1)} := \{\ket{v_0^{(1)}},\ldots,\ket{v_5^{(1)}},\ket{w_0^{(1)}},\ldots,\ket{w_5^{(1)}}\} \in \bb{C}^6$ and $S^{(2)} := \{\ket{v_0^{(2)}},\ldots,\ket{v_5^{(2)}},\ket{w_0^{(2)}},\ldots,\ket{w_5^{(2)}}\} \in \bb{C}^6$ so that their orthogonality graphs $(V,E_1)$ and $(V,E_2)$ satisfy $(V,E_1 \cup E_2) = K_{12}$, the complete graph on $12$ vertices, since this ensures that any two distinct elements of $S$ are orthogonal. For convenience, in the future if we have two graphs $G_1 := (V,E_1)$ and $G_2 := (V,E_2)$, then we define $G_1 \cup G_2 := (V,E_1 \cup E_2)$.

We begin by considering the graph $D_{6,0} := (V,E_1)$, which is constructed by taking two copies of $K_6$ and pairwise joining vertices, as in Figure~\ref{fig:double_complete_graph}. We also consider a second graph, $C_{6,0} := (V,E_2)$, which we define simply to be the complement of $D_{6,0}$ (i.e., an edge is in $E_2$ if and only if it is not in $E_1$), as in Figure~\ref{fig:double_complete_complement}.
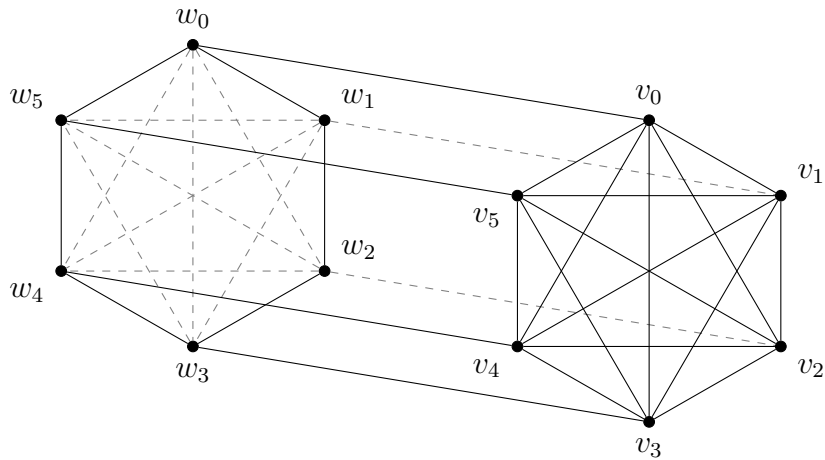
\begin{figure}[htb]
	\centering
	\begin{tikzpicture}[x=2cm, y=2cm, label distance=0cm]
		\vertex[fill] (v0) at (0,1) [label=90:$w_0$]{};
		\vertex[fill] (v1) at (0.866,0.5) [label=15:$w_1$]{};
		\vertex[fill] (v2) at (0.866,-0.5) [label=15:$w_2$]{};
		\vertex[fill] (v3) at (0,-1) [label=270:$w_3$]{};
		\vertex[fill] (v4) at (-0.866,-0.5) [label=195:$w_4$]{};
		\vertex[fill] (v5) at (-0.866,0.5) [label=165:$w_5$]{};
		
		\vertex[fill] (w0) at (3,0.5) [label=90:$v_0$]{};
		\vertex[fill] (w1) at (3.866,0) [label=15:$v_1$]{};
		\vertex[fill] (w2) at (3.866,-1) [label=345:$v_2$]{};
		\vertex[fill] (w3) at (3,-1.5) [label=270:$v_3$]{};
		\vertex[fill] (w4) at (2.134,-1) [label=195:$v_4$]{};
		\vertex[fill] (w5) at (2.134,0) [label=195:$v_5$]{};
			
		\path 
			(v1) edge[dashed,color=gray] (w1)
			(v2) edge[dashed,color=gray] (w2)
			(v0) edge[dashed,color=gray] (v2)
			(v0) edge[dashed,color=gray] (v3)
			(v0) edge[dashed,color=gray] (v4)
			(v1) edge[dashed,color=gray] (v3)
			(v1) edge[dashed,color=gray] (v4)
			(v1) edge[dashed,color=gray] (v5)
			(v2) edge[dashed,color=gray] (v4)
			(v2) edge[dashed,color=gray] (v5)
			(v3) edge[dashed,color=gray] (v5)
			(v0) edge (v1)
			(v0) edge (v5)
			(v1) edge (v2)
			(v2) edge (v3)
			(v3) edge (v4)
			(v4) edge (v5)
			
			(w0) edge (w1)
			(w0) edge (w2)
			(w0) edge (w3)
			(w0) edge (w4)
			(w0) edge (w5)
			(w1) edge (w2)
			(w1) edge (w3)
			(w1) edge (w4)
			(w1) edge (w5)
			(w2) edge (w3)
			(w2) edge (w4)
			(w2) edge (w5)
			(w3) edge (w4)
			(w3) edge (w5)
			(w4) edge (w5)
			
			(v0) edge (w0)
			(v3) edge (w3)
			(v4) edge (w4)
			(v5) edge (w5)
		;
	\end{tikzpicture}
	\caption{The graph $D_{6,0}$ used in the construction of UPBs in which $d_p - 1 = \sum_{j=1}^{p-1}(d_j - 1) = 5$.}\label{fig:double_complete_graph}
\end{figure}

\begin{figure}[htb]
	\centering
	\begin{tikzpicture}[x=2cm, y=2cm, label distance=0cm]
		\vertex[fill] (v0) at (0,1.5) [label=above:$v_0$]{};
		\vertex[fill] (v1) at (1,1.5) [label=above:$v_1$]{};
		\vertex[fill] (v2) at (2,1.5) [label=above:$v_2$]{};
		\vertex[fill] (v3) at (3,1.5) [label=above:$v_3$]{};
		\vertex[fill] (v4) at (4,1.5) [label=above:$v_4$]{};
		\vertex[fill] (v5) at (5,1.5) [label=above:$v_5$]{};
		
		\vertex[fill] (w0) at (0,0) [label=below:$w_0$]{};
		\vertex[fill] (w1) at (1,0) [label=below:$w_1$]{};
		\vertex[fill] (w2) at (2,0) [label=below:$w_2$]{};
		\vertex[fill] (w3) at (3,0) [label=below:$w_3$]{};
		\vertex[fill] (w4) at (4,0) [label=below:$w_4$]{};
		\vertex[fill] (w5) at (5,0) [label=below:$w_5$]{};
		
		\path 
			(v0) edge (w1)
			(v0) edge (w2)
			(v0) edge (w3)
			(v0) edge (w4)
			(v0) edge (w5)
			(v1) edge (w0)
			(v1) edge (w2)
			(v1) edge (w3)
			(v1) edge (w4)
			(v1) edge (w5)
			(v2) edge (w0)
			(v2) edge (w1)
			(v2) edge (w3)
			(v2) edge (w4)
			(v2) edge (w5)
			(v3) edge (w0)
			(v3) edge (w1)
			(v3) edge (w2)
			(v3) edge (w4)
			(v3) edge (w5)
			(v4) edge (w0)
			(v4) edge (w1)
			(v4) edge (w2)
			(v4) edge (w3)
			(v4) edge (w5)
			(v5) edge (w0)
			(v5) edge (w1)
			(v5) edge (w2)
			(v5) edge (w3)
			(v5) edge (w4)
		;
	\end{tikzpicture}
	\caption{The graph $C_{6,0}$, which is the complement of $D_{6,0}$.}\label{fig:double_complete_complement}
\end{figure}
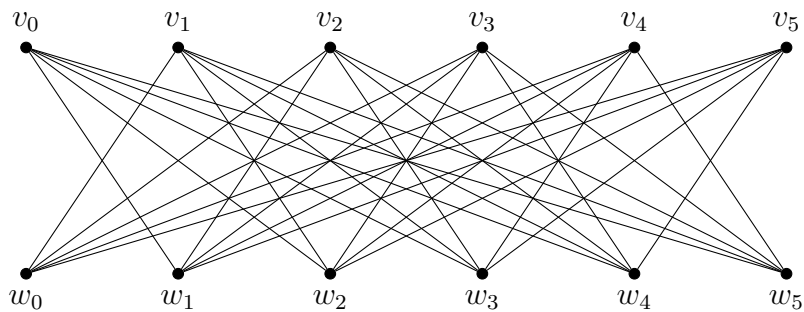

We now must show that we can construct two sets of $12$ vectors in $\bb{C}^6$ that have orthogonality graphs $D_{6,0}$ and $C_{6,0}$. For $C_{6,0}$, this certainly seems like it should be possible, since each vector lives in $\bb{C}^6$ and is orthogonal to exactly $5$ other vectors -- we use tools from algebraic geometry to make this intuition rigorous.

For $D_{6,0}$, the idea is to define $\ket{v_j^{(2)}} := \ket{j}$ to be computational basis states for all $0 \leq j < 6$. Then if we want the orthogonality graph of $S^{(2)}$ to equal $D_{6,0}$, we need to find states $\{\ket{w_j^{(2)}}\}$ that form an orthonormal basis of $\bb{C}^6$ and satisfy $\braket{w_j^{(2)}}{j} = 0$ for all $j$. In other words, we need to find a unitary matrix that has zeroes on its diagonal. We again use methods from algebraic geometry to prove that many such unitary matrices exist, and furthermore that generic sets of vectors with these orthogonality graphs lead to product bases that are indeed unextendible. The ideas presented above generalize without much difficulty to all cases considered by Theorem~\ref{thm:main}.

\section{Proof of Theorem~\ref{thm:main}}\label{sec:proof}

We now present the main body of the proof of Theorem~\ref{thm:main}. We make use of the two technical lemmas~\ref{lem:hollow_unitary} and~\ref{lem:cp_orthogonality} throughout the proof, whose statements and proofs we leave to the end of the section.

Define the quantity $b := \big((d_p - 1) - \sum_{j=1}^{p-1}(d_j - 1)\big)/2 \geq 0$, which we can assume is an integer since condition~(2) of Section~\ref{sec:min_size} solves the case when it is a half-integer. Define $D_{d_p,b}$ to be the graph on $2(d_p - b) = f_N(d_1,\ldots,d_p) + 1$ vertices $V := \{v_0,\ldots,v_{d_p-b-1},w_0,\ldots,w_{d_p-b-1}\}$ such that $v_i$ is adjacent to $v_j$ and $w_i$ is adjacent to $w_j$ for all $i \neq j$, and $v_i$ is adjacent to $w_j$ if and only if $j - i \equiv s \, (\text{mod } (d_p - b))$ for some $0 \leq s \leq b$. In the $d_p = 6, b = 0$ case, this graph is depicted in Figure~\ref{fig:double_complete_graph}. The graph $D_{7,1}$ looks the same, except it has $6$ additional edges: $(v_0,w_1), (v_1,w_2), \ldots, (v_5,w_0)$. 

Our first goal is to construct a set of $2(d_p - b)$ vectors
\begin{align*}
	S^{(p)} := \Big\{\ket{v_0^{(p)}},\ldots,\ket{v_{d_p-b-1}^{(p)}},\ket{w_0^{(p)}},\ldots,\ket{w_{d_p-b-1}^{(p)}}\Big\} \in \bb{C}^{d_p}
\end{align*}
such that the orthogonality graph of $S^{(p)}$ is $D_{d_p,b}$. To this end, let $\ket{v_j^{(p)}} := \ket{j}$ be standard basis states for all $0 \leq j < d_p - b$. We then need to find states $\{\ket{w_j^{(p)}}\}_{j=0}^{d_p-b-1}$ that form an orthonormal set and have the additional property that $\braket{(j + \ell) \, (\text{mod } d)}{w_j^{(p)}} = 0$ for all $0 \leq \ell \leq b$. We choose these states to be the (normalized) columns of a matrix $U$ described by Lemma~\ref{lem:hollow_unitary}, which is proved at the end of this section. Importantly, note that any $d_p+1$ distinct elements of $S^{(p)}$ span all of $\bb{C}^{d_p}$. To see this, suppose we choose $r+1$ $\ket{w_j^{(p)}}$'s and $d_p - r$ $\ket{v_j^{(p)}}$'s from $S^{(p)}$ (for some $b \leq r < d_p - b$). It then suffices to show that every $r \times (r+1)$ submatrix of $U$ has full rank $r$. When $b+2 \leq r \leq d_p-b-2$, this claim follows immediately from condition $(iii)$ of Lemma~\ref{lem:hollow_unitary}. Similarly, condition $(iii)$ says that every $(b+2)\times(b+2)$ submatrix of $U$ has full rank, so every $(b+1)\times(b+2)$ submatrix of $U$ must have rank $b+1$, which proves the $r = b+1$ case. The $r = d_p-b-1$ case is similar. The last remaining case is $r = b$, in which case it is not true that every $r \times (r+1)$ submatrix of $U$ has full rank $r$. Nonetheless, because all $d_p - b$ $\ket{v_j^{(p)}}$'s have been chosen, it suffices to just show that every $b \times b$ submatrix contained entirely within the bottom $b$ rows of $U$ is nonsingular, which is guaranteed by $(iv)$ of Lemma~\ref{lem:hollow_unitary}. This completes the proof that any $d_p+1$ distinct elements of $S^{(p)}$ span all of $\bb{C}^{d_p}$.

Now we consider the complement of the graph $D_{d_p,b}$, which we denote $C_{d_p,b}$. This is a graph on the same set of $2(d_p - b)$ vertices $V$, but $(v_i,v_j)$ and $(w_i,w_j)$ are never edges in $C_{d_p,b}$, and $(v_i,w_j)$ is an edge in $C_{d_p,b}$ if and only if $j - i \equiv \ell \, (\text{mod } (d_p - b))$ for some $b < \ell < d_p - b$. For example, in the $d_p = 6, b = 0$ case, this graph is depicted in Figure~\ref{fig:double_complete_complement}. The graph $C_{7,1}$ looks the same, but without the edges $(v_0,w_1), (v_1,w_2), \ldots, (v_5,w_0)$.

We now split $C_{d_p,b}$ into $p - 1$ spanning subgraphs -- one for each of the first $p - 1$ systems. Let $C_{d_p,b,r,s}$ ($r,s \geq 1$ and $s+r \leq d_p - b$) be the subgraph of $C_{d_p,b}$ that contains the edge $(v_i, w_j)$ if and only if $j - i \equiv \ell \, (\text{mod } (d_p - b))$ for some $s \leq \ell < s + r$. For example, the graphs $C_{6,0,2,1}$ and $C_{6,0,3,3}$ are depicted in Figure~\ref{fig:double_complete_complement_decomp}. To help keep in mind which indices represent what, we can think of $r$ as the regularity of the graph (i.e., $C_{d_p,b,r,s}$ is $r$-regular -- every vertex has degree $r$), and $s$ as a shift (since, for example, the smallest index $j$ such that $(v_0,w_j)$ is an edge in $C_{d_p,b,r,s}$ is $j = s$). It is straightforward to see that we have $C_{d_p,b,r_1,s_1} \cup \cdots \cup C_{d_p,b,r_k,s_k} = C_{d_p,b}$ whenever $\sum_{j=1}^{k} r_j = d_p - 2b - 1$ and $s_1 = 1$, $s_j = s_{j-1} + r_{j-1}$ for $1 < j \leq k$. We choose $k = p - 1$ and $r_j = d_j - 1$ for all $1 \leq j \leq p-1$.
\begin{figure}[htb]
	\centering
	\begin{tikzpicture}[x=1cm, y=1cm, label distance=0cm]
		\vertex[fill] (v01) at (0,1.5) [label=above:$v_0$]{};
		\vertex[fill] (v11) at (1,1.5) [label=above:$v_1$]{};
		\vertex[fill] (v21) at (2,1.5) [label=above:$v_2$]{};
		\vertex[fill] (v31) at (3,1.5) [label=above:$v_3$]{};
		\vertex[fill] (v41) at (4,1.5) [label=above:$v_4$]{};
		\vertex[fill] (v51) at (5,1.5) [label=above:$v_5$]{};
		
		\vertex[fill] (w01) at (0,0) [label=below:$w_0$]{};
		\vertex[fill] (w11) at (1,0) [label=below:$w_1$]{};
		\vertex[fill] (w21) at (2,0) [label=below:$w_2$]{};
		\vertex[fill] (w31) at (3,0) [label=below:$w_3$]{};
		\vertex[fill] (w41) at (4,0) [label=below:$w_4$]{};
		\vertex[fill] (w51) at (5,0) [label=below:$w_5$]{};

		\vertex[fill] (v02) at (7,1.5) [label=above:$v_0$]{};
		\vertex[fill] (v12) at (8,1.5) [label=above:$v_1$]{};
		\vertex[fill] (v22) at (9,1.5) [label=above:$v_2$]{};
		\vertex[fill] (v32) at (10,1.5) [label=above:$v_3$]{};
		\vertex[fill] (v42) at (11,1.5) [label=above:$v_4$]{};
		\vertex[fill] (v52) at (12,1.5) [label=above:$v_5$]{};
		
		\vertex[fill] (w02) at (7,0) [label=below:$w_0$]{};
		\vertex[fill] (w12) at (8,0) [label=below:$w_1$]{};
		\vertex[fill] (w22) at (9,0) [label=below:$w_2$]{};
		\vertex[fill] (w32) at (10,0) [label=below:$w_3$]{};
		\vertex[fill] (w42) at (11,0) [label=below:$w_4$]{};
		\vertex[fill] (w52) at (12,0) [label=below:$w_5$]{};
		
		\path 
			(v01) edge (w11)
			(v01) edge (w21)
			(v11) edge (w21)
			(v11) edge (w31)
			(v21) edge (w31)
			(v21) edge (w41)
			(v31) edge (w41)
			(v31) edge (w51)
			(v41) edge (w51)
			(v41) edge (w01)
			(v51) edge (w01)
			(v51) edge (w11)

			(v02) edge (w32)
			(v02) edge (w42)
			(v02) edge (w52)
			(v12) edge (w42)
			(v12) edge (w52)
			(v12) edge (w02)
			(v22) edge (w02)
			(v22) edge (w12)
			(v22) edge (w52)
			(v32) edge (w02)
			(v32) edge (w12)
			(v32) edge (w22)
			(v42) edge (w12)
			(v42) edge (w22)
			(v42) edge (w32)
			(v52) edge (w22)
			(v52) edge (w32)
			(v52) edge (w42)
			;
	\end{tikzpicture}
	\caption{The graph $C_{6,0}$, decomposed into the union of $C_{6,0,2,1}$ and $C_{6,0,3,3}$.}\label{fig:double_complete_complement_decomp}
\end{figure}
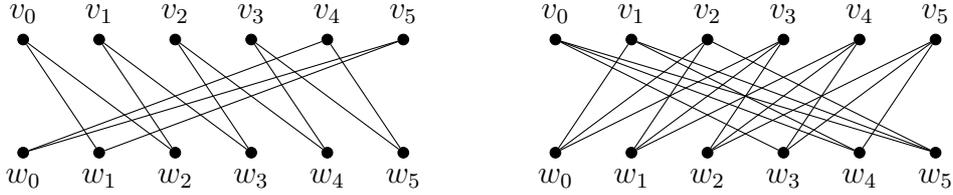

Our goal now, for each $1 \leq j < p$, is to find a set of $2(d_p - b)$ vectors $S^{(j)}$ whose orthogonality graph is $C_{d_p,b,d_j-1,s_j}$. We choose $S^{(j)}$ to be the (normalized) columns of the matrix $V$ described by Lemma~\ref{lem:cp_orthogonality} (with $q = d_p - b$, $r = d_j - 1$, and $s = s_j$). Not only do these vectors have the desired orthogonality graph, but property~(ii) of the lemma ensures that any $d_j$ distinct elements of $S^{(j)}$ ($1 \leq j < p$) span all of $\bb{C}^{d_j}$.

Since we have $C_{d_p,b,d_1-1,s_1} \cup \cdots \cup C_{d_p,b,d_{p-1}-1,s_{p-1}} \cup D_{d_p,b} = C_{d_p,b} \cup D_{d_p,b} = K_{2(d_p-b)}$, the complete graph on $2(d_p-b)$ vertices, it follows that the set of product states
\begin{align*}
	S := \left\{ \bigotimes_{i=1}^p \ket{v_0^{(i)}}, \ldots, \bigotimes_{i=1}^p \ket{v_{d_p-b-1}^{(i)}}, \bigotimes_{i=1}^p \ket{w_0^{(i)}}, \ldots, \bigotimes_{i=1}^p \ket{w_{d_p-b-1}^{(i)}} \right\}
\end{align*}
are mutually orthonormal. To see that this set is unextendible, recall that any $d_p + 1$ of the vectors in $S^{(p)}$ span all of $\bb{C}^{d_p}$ and any $d_j$ of the vectors in $S^{(j)}$ span all of $\bb{C}^{d_j}$ for $1 \leq j < p$. Thus, any product state not in $S$ can be orthogonal to at most $d_p$ elements of $S$ on the $p$-th subsystem and at most $d_j - 1$ elements of $S$ on the $j$-th subsystem for $1 \leq j < p$, and thus it can be orthogonal to a total of at most $\sum_{j=1}^{p-1}(d_j - 1) + d_p = f_N(d_1,\ldots,d_p)$ elements of $S$. However, there are $f_N(d_1,\ldots,d_p) + 1$ vectors in $S$, so no product state is orthogonal to them all, which completes the proof.

We now present and prove the two technical lemmas that we made use of earlier in this section.
\begin{lemma}\label{lem:hollow_unitary}
	For all integers $b \geq 0$ and $d \geq 4+2b$, there exists a matrix $U \in M_{d,d-b}$ such that the following four conditions hold:
	\begin{enumerate}[(i)]
		\item the columns of $U$ form an orthonormal set;
		\item $\bra{(j + \ell) \, (\text{mod } (d-b))}U\ket{j} = 0$ for all $0 \leq j < d-b$ and $0 \leq \ell \leq b$;
		\item all $r \times r$ submatrices of $U$ are nonsingular whenever $r \in \{b+2,b+3,\ldots,d-b\} \setminus \{d-b-1\}$; and
		\item all $b \times b$ submatrices of $U$ that are contained entirely within the bottom $b$ rows of $U$ are nonsingular.
	\end{enumerate}
\end{lemma}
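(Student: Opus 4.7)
The plan is an algebraic-genericity argument. Let $Y \subset M_{d, d-b}(\bb{C})$ denote the set of matrices satisfying conditions~(i) and~(ii); each of (iii) and (iv) is a finite conjunction of Zariski-open conditions (non-vanishing of certain submatrix determinants). The overall strategy is to show that $Y$ is a nonempty smooth real-analytic manifold of positive dimension, and that each of the ``bad'' determinant-vanishing loci meets $Y$ only in a proper subset. A generic point of $Y$ then avoids every bad locus and yields the desired $U$.

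For the existence and positive-dimensionality of $Y$, I would construct a base matrix $U_0 \in Y$ column by column. Column $j$ must lie in the $(d-b-1)$-dimensional subspace $V_j \subset \bb{C}^d$ singled out by~(ii), and be orthogonal to the previously chosen columns. A dimension count shows that for $j \leq d-b-2$ the allowable space has complex dimension at least $d-b-1-j \geq 1$, so there is always a free choice. The last column $j = d-b-1$ is the delicate case: the expected intersection dimension is $0$, so I would arrange the preceding columns in a non-generic configuration forcing $V_{d-b-1}$ to meet the orthogonal complement of their span in a line. A tangent-space calculation at $U_0$ then shows that near $U_0$, $Y$ is a smooth real-analytic manifold of real dimension $(d-b)(d-b-2) > 0$ (using the hypothesis $d \geq 4 + 2b$).

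For genericity of~(iii) and~(iv): condition~(iv) is comparatively easy since the bottom $b$ rows of $U$ are unconstrained by~(ii); a Vandermonde-style choice of these rows, with a small compensation in the top rows to preserve orthonormality via an implicit-function-theorem argument, makes every bottom $b \times b$ submatrix nonsingular. For~(iii), the argument is more combinatorially involved: for each admissible index pair $(I, J)$, I need to find an element of $Y$ where the submatrix $U_{I,J}$ is nonsingular. I would do this by perturbing $U_0$ along a suitable tangent direction to $Y$ and checking that the differential of $\det U_{I,J}$ is not identically zero on the tangent space --- a linear-algebra computation for each of the finitely many submatrices. Intersecting the finitely many open conditions yields a dense open subset of $Y$ satisfying all four conditions simultaneously.

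The main obstacle I foresee is twofold: (a) constructing the last column of $U_0$ so that the required intersection is nontrivial, and (b) verifying non-vanishing of the determinant-differentials for every submatrix appearing in~(iii). Because $Y$ is a real-analytic (not complex algebraic) variety --- orthonormality uses conjugates --- the standard complex-geometric irreducibility arguments do not apply directly, and the genericity step must be carried out at the level of the smooth real manifold structure near the explicit base point $U_0$ rather than via a single global argument.
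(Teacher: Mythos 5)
Your overall framework is the same as the paper's: the paper also works inside the real variety $Z\subseteq\mathbb{R}^{2d(d-b)}$ cut out by (i) and (ii) and shows that each determinant-vanishing locus meets $Z$ in a measure-zero subset, so that a generic point of $Z$ satisfies (iii) and (iv) as well. The genuine gap in your proposal is the step that does all the work: showing, for \emph{every} admissible index pair $(I,J)$, that $\det U_{I,J}$ does not vanish identically on (the relevant piece of) $Y$. Your plan is to fix a single hand-built base point $U_0$ and check that the differential of each $\det U_{I,J}$ is nonzero on $T_{U_0}Y$. Nothing guarantees that one $U_0$ can serve all pairs $(I,J)$, and for the natural explicit choices it provably cannot: if $U_0$ is sparse (e.g.\ built column-by-column from standard basis vectors, or a column-permutation matrix), then many $r\times r$ submatrices $U_{I,J}(U_0)$ have rank at most $r-2$, and at such a point every cofactor, hence every first-order partial derivative of the determinant, vanishes; the differential is then identically zero on the whole ambient space, let alone on the tangent space, so a first-order perturbation argument at $U_0$ cannot certify anything for those $(I,J)$. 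You would need either higher-order information or a separate witness for each $(I,J)$ --- and the latter is exactly what the paper does: for each $(I,J)$ with $b+2\le r\le d-b-2$ it exhibits a column-permutation matrix $W$ compatible with the zero pattern of (ii) whose $(I,J)$ submatrix is a permutation matrix (so the determinant is nonzero at $W$), and the existence of such $W$ is a genuinely combinatorial fact, reduced to the $(b+2)\times(b+2)$ case and settled by Theorem~3.1 and Lemma~3.7 of \cite{HS93} on zero patterns. This combinatorial core is precisely your acknowledged obstacle (b), and your proposal does not supply it.

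By contrast, your obstacle (a) is not an obstacle at all: nonemptiness of $Y$ requires no delicate last-column construction, since the column-permutation matrix with a $1$ in position $\big((j+b+1)\bmod(d-b),\,j\big)$ for each column $j$ already has orthonormal columns and the zero pattern of (ii); this is the paper's first witness. Once explicit witnesses are available for every bad locus, the smooth-manifold and implicit-function-theorem machinery is unnecessary --- the finite union of measure-zero intersections already leaves a nonempty (indeed generic) set of good points, and your separate Vandermonde-plus-IFT treatment of (iv) is subsumed by the same argument applied to a witness whose nonzero $b\times b$ block sits in the bottom $b$ rows, which are unconstrained by (ii).
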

\begin{proof}
For any $U \in M_{d,d-b}$, we write
\begin{eqnarray*}
U=\left(
\begin{array}{cccc}
x_{1,1}+iy_{1,1}  & x_{1,2}+iy_{1,2} & \cdots & x_{1,d-b}+iy_{1,d-b}  \\
x_{2,1}+iy_{2,1}  & x_{2,2}+iy_{2,2} & \cdots & x_{2,d-b}+iy_{2,d-b}  \\
\vdots & \vdots & \ddots & \vdots\\
x_{d,1}+iy_{d,1}  & x_{d,2}+iy_{d,2} & \cdots & x_{d,d-b}+iy_{d,d-b}  
\end{array}
\right)
\end{eqnarray*}
where $x_{p,q}, y_{p,q}\in \mathbb{R}$ for all $1\leq p\leq d, 1\leq q\leq d-b$.

Then conditions $(i)$ and $(ii)$ can be rewritten as follows:
\begin{enumerate}[$(i^{\prime})$]
\item $x_{(j + \ell)(\text{mod } (d-b)),j}=y_{(j + \ell)(\text{mod } (d-b)),j}=0$ for all $1 \leq j \leq d-b$ and $0 \leq \ell \leq b$, 
\item $\sum\limits_{k=1}^{d} (x_{kp}-iy_{kp})(x_{kq}+iy_{kq})=0 \textrm{\ }\forall \, 1\leq p<q\leq d-b$,
\end{enumerate}

We are mainly interested in the real variety $Z\subseteq {\mathbb{R}}^{2d(d-b)}$ characterized by conditions $(i^{\prime})$ and $(ii^{\prime})$. We also look into the real variety $Z_{i_1i_2\ldots i_r,j_1j_2\ldots j_r}\subseteq {\mathbb{R}}^{2d(d-b)}$ characterized by polynomials of $x_{p,q}$'s and $y_{p,q}$'s arising from the constraint $\det(U_{i_1i_2\ldots i_r,j_1j_2\ldots j_r})=0$, where $U_{i_1i_2\cdots i_r,j_1j_2\cdots j_r}$ is the $r \times r$ submatrix of $U$ formed by rows $i_1,i_2,\ldots, i_r$ and columns $j_1,j_2,\ldots,j_r$. 

Our aim is to show that
\begin{align*}
	Z\setminus\Big( \big(\bigcup\limits_{r \in \{b+2,b+3,\ldots,d-b\} \setminus \{d-b-1\} \atop {1\leq i_1<\cdots <i_r\leq d\atop 1\leq j_1<\cdots<j_r\leq d-b}}  Z_{i_1i_2\ldots i_r,j_1j_2\ldots j_r}\big)\cup \big(\bigcup\limits_{ {d-b+1\leq i_1<\cdots <i_{b}\leq d\atop 1\leq j_1<\cdots<j_{b}\leq d-b}} Z_{i_1i_2\ldots i_{b}, j_1j_2\ldots j_{b}}\big)\Big) \neq \emptyset.
\end{align*}

Note that if $Z_{i_1i_2\ldots i_r,j_1j_2\ldots j_r}\cap Z$ is a proper subset of $Z$, then $Z_{i_1i_2\ldots i_r,j_1j_2\ldots j_r}\cap Z$ has zero measure in $Z$. In order to show that $Z_{i_1i_2\ldots i_r,j_1j_2\ldots j_r}\cap Z$ is a proper subset of $Z$, we first need some simple definitions. A \emph{permutation matrix} is a square binary matrix that has exactly one entry $1$ in each row and each column and $0$'s elsewhere. Similarly, a \emph{column-permutation matrix} is a (not necessarily square) binary matrix that has exactly one entry $1$ in each column, at most one entry $1$ in each row, and $0$'s elsewhere. Clearly, every permutation matrix $W\in M_{d,d}$ has full rank $d$ and every column-permutation matrix $W\in M_{d,d-b}$ has rank $d-b$.

Let $W\in M_{d,d-b}$ be the matrix with a $1$ in the $({(j+b+1)(\text{mod }(d-b)),j})$-entry for all $1 \leq j \leq d-b$, and $0$'s elsewhere. The top $(d-b) \times (d-b)$ submatrix of $W$ is a permutation matrix, so it is nonsingular. Therefore the corresponding point in $\mathbb{R}^{2d(d-b)}$ that represents $W$ lies in $Z\setminus Z_{12\ldots (d-b),12\ldots (d-b)}$, which implies that $Z_{12\ldots (d-b),12\ldots (d-b)}\cap Z$ is a proper subset of $Z$. Hence $Z_{12\ldots (d-b),12\ldots (d-b)}\cap Z$ has measure zero in $Z$.

Similarly, for any $1\leq i_1<\cdots <i_{d-b}\leq d$, we can always move some appropriately chosen nonzero rows of the above $W$ to its lower $b \times (d-b)$ submatrix such that the corresponding point in $\mathbb{R}^{2d(d-b)}$ lies in some $Z\setminus Z_{i_1i_2\ldots i_{(d-b)},12\ldots (d-b)}$. Hence,  $Z\cap Z_{i_1i_2\ldots i_{(d-b)},12\ldots (d-b)}$ has measure zero in $Z$.

Next, for any fixed $1\leq i_1<\cdots <i_r\leq d$ and $1\leq j_1<\cdots<j_r\leq d-b$, where $b+2\leq r\leq d-b-2$, if there is some column-permutation matrix $W\in M_{d,d-b}$ satisfying condition $(ii)$ such that $W_{i_1i_2\ldots i_r,j_1j_2\ldots j_r}$ is a permutation matrix, then the point in $\mathbb{R}^{2d(d-b)}$ that represents $W$ lies in $Z\setminus (Z_{12\ldots (d-b), 12\ldots (d-b)}\cup Z_{i_1i_2\ldots i_r,j_1j_2\ldots j_r})$. It follows that if such a matrix $W$ exists then $Z\cap (Z_{12\ldots (d-b), 12\ldots (d-b)}\cup Z_{i_1i_2\ldots i_r,j_1j_2\ldots j_r})$ also has measure zero in $Z$.

We can construct $W\in M_{d,d-b}$ to satisfy these requirements as follows. We choose the submatrix formed by rows $i_1,i_2,\ldots, i_r$ and columns  $j_1,j_2,\ldots,j_r$ to be a permutation submatrix and the submatrix formed by rows $\{1,2,\ldots,d\}\setminus \{i_1,i_2,\ldots ,i_r\}$ and columns $\{1,2,\ldots, d-b\}\setminus \{j_1,j_2,,\ldots,j_r\}$ to be a column-permutation submatrix. All other entries are chosen to be zero. 

To see that the above two submatrices of $W$ can be chosen as desired, note that condition $(ii)$ forces $(b+1)(d-b)$ entries of $W\in M_{d,d-b}$ to be zero. For any given $r \times r$ submatrix $W_{i_1i_2\ldots i_r, j_1j_2\ldots j_r}$ formed by rows $i_1,i_2,\ldots, i_r$ and columns $j_1,j_2,\ldots,j_r$, where $b+2\leq r\leq d-b-2$,  there are at most $(b+1)$ preset zero entries in each column $j_t$ $(1\leq t\leq r)$. So we can first fix a nonpreset entry $(i_1,j_1)$ to be $1$ and then seek an $(r-1) \times (r-1)$ permutation submatrix $W_{i_2\ldots i_r, j_2\ldots j_r}$. This observation implies that we only need to prove the existence of $(b+2) \times (b+2)$ permutation submatrix formed by any $(b+2)$ rows and $(b+2)$ columns that satisfy condition $(ii)$. We also require the existence of $(b+2) \times 2$ column-permutation submatrix formed by any $(b+2)$ rows and any $2$ columns that satisfy condition $(ii)$, but this claim is obvious. Hence, the only thing left is to prove is that, given fixed $1\leq i_1<\cdots <i_{b+2}\leq d$ and $1\leq j_1<\cdots<j_{b+2}\leq d-b$, there exists $W \in M_{d,d-b}$ satisfying condition $(ii)$ such that $W_{i_1i_2\ldots i_{b+2},j_1j_2\ldots j_{b+2}}$ is a permutation matrix.


To see that such a matrix $W$ exists, note that there is no $x \times y$ submatrix of $W$ with $x+y = b+3$ that is forced entirely to equal zero by condition $(ii)$. Thus the same is true of all $(b+2)\times(b+2)$ submatrices of $W$. Thus there is no $(b+2)\times(b+2)$ submatrix of $W$ that satisfies condition $(ii)$ of \cite[Theorem~3.1]{HS93} with $m = n = b+2$, $r = b+1$. By then using condition $(iii)$ of \cite[Theorem~3.1]{HS93} and Lemma~3.7 of the same paper, we see that every $(b+2)\times(b+2)$ submatrix of $W$ can be made nonsingular by a suitable choice of the entries that are unaffected by condition~$(ii)$. If we write the determinant as the sum of products of entries of the $(b+2) \times (b+2)$ submatrix, then at least one of the terms in the sum can be made nonzero -- the permutation matrix we want is the one that corresponds to this term in the sum.

We have shown that $Z \cap Z_{i_1i_2\ldots i_r,j_1j_2\ldots j_r} $ has measure zero in $Z$ for any $r \in \{b+2,b+3,\ldots,d-b\}\setminus \{d-b-1\}$, $1\leq i_1<\cdots <i_r\leq d$ and $1\leq j_1<\cdots<j_r\leq d-b$.

For any ${d-b+1\leq i_1<\cdots <i_{b}\leq d}$ and ${1\leq j_1<\cdots<j_{b}\leq d-b}$, we choose suitable $W\in M_{d,d-b}$ such that its $b \times b$ submatrix formed by rows $i_1,i_2,\ldots, i_b$ and columns $j_1,j_2,\ldots,j_b$ is a permutation matrix and all other entries are fixed to be zero. Its corresponding point in $\mathbb{R}^{2d(d-b)}$ lies in $Z\setminus Z_{i_1i_2\ldots i_{b},j_1j_2\ldots j_b}$. Hence,  $Z\cap Z_{i_1i_2\ldots i_b,j_1j_2\ldots j_b}$ has measure zero in $Z$.

This implies that
\begin{align*}
	Z\bigcap \Big( \big(\bigcup\limits_{r \in \{b+2,b+3,\ldots,d-b\} \setminus \{d-b-1\} \atop {1\leq i_1<\cdots <i_r\leq d\atop 1\leq j_1<\cdots<j_r\leq d-b}}  Z_{i_1i_2\ldots i_r,j_1j_2\ldots j_r}\big)\cup \big(\bigcup\limits_{ {d-b+1\leq i_1<\cdots <i_{b}\leq d\atop 1\leq j_1<\cdots<j_{b}\leq d-b}} Z_{i_1i_2\ldots i_{b}, j_1j_2\ldots j_{b}}\big)\Big)
\end{align*}
also has measure zero in $Z$ since the union of a finite number of measure zero subsets is again a measure zero subset. In particular, it follows that
\begin{align*}
	Z\setminus\Big( \big(\bigcup\limits_{r \in \{b+2,b+3,\ldots,d-b\} \setminus \{d-b-1\} \atop {1\leq i_1<\cdots <i_r\leq d\atop 1\leq j_1<\cdots<j_r\leq d-b}}  Z_{i_1i_2\ldots i_r,j_1j_2\ldots j_r}\big)\cup \big(\bigcup\limits_{ {d-b+1\leq i_1<\cdots <i_{b}\leq d\atop 1\leq j_1<\cdots<j_{b}\leq d-b}} Z_{i_1i_2\ldots i_{b}, j_1j_2\ldots j_{b}}\big)\Big) \neq \emptyset.
\end{align*}

This implies the existence of matrix $U \in M_{d,d-b}$ with properties $(i), (ii), (iii)$ and $(iv)$. Indeed, a generic matrix $U \in M_{d,d-b}$ satisfying $(i)$ and $(ii)$ will also satisfy conditions $(iii)$ and $(iv)$. 
\end{proof}

\begin{lemma}\label{lem:cp_orthogonality}
	Let $q,r,s$ be positive integers satisfying $q \geq r+s$. Then there exists a matrix $V \in M_{r+1,2q}$ such that:
	\begin{enumerate}[(i)]
		\item for all $0 \leq j < q$ and $s \leq \ell < s+r$, $V\ket{j}$ is orthogonal to $V\ket{(j+\ell)(\text{mod } q) + q}$; and
		\item every $(r+1)\times(r+1)$ submatrix of $V$ is nonsingular.
	\end{enumerate}
\end{lemma}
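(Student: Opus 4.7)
The plan is to adapt the algebraic--geometric strategy of Lemma~\ref{lem:hollow_unitary}. I consider the real variety $Z \subseteq \bb{R}^{4q(r+1)}$ consisting of all matrices $V \in M_{r+1,2q}$ (parameterized by the real and imaginary parts of their entries) that satisfy the bilinear equations obtained by taking real and imaginary parts of the inner products appearing in condition~$(i)$. Since $V$ has only $r+1$ rows, every $(r+1) \times (r+1)$ submatrix uses all rows together with a choice of $r+1$ columns, so condition~$(ii)$ is equivalent to the requirement that every $r+1$ of the $2q$ columns of $V$ be linearly independent. For each $(r+1)$-subset $J \subseteq \{0,\ldots,2q-1\}$, let $Z_J$ be the subvariety on which the determinant of the submatrix indexed by $J$ vanishes. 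The goal is then to show $Z \setminus \bigcup_J Z_J \neq \emptyset$, which will follow from the measure-theoretic argument used in the proof of Lemma~\ref{lem:hollow_unitary} once each $Z \cap Z_J$ is shown to be a proper subvariety of $Z$.

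To set up this verification, I would parameterize an open dense part of $Z$ by choosing the last $q$ columns of $V$ (the ``w-columns'') freely in $\bb{C}^{r+1}$. The hypothesis $q \geq r+s$ guarantees that for every $0 \leq j < q$ the indices $(j+\ell) \bmod q$ with $s \leq \ell < s+r$ are all distinct, so for generic w-columns the $r$ prescribed neighbors of the $j$-th v-column are linearly independent, their orthogonal complement in $\bb{C}^{r+1}$ is one-dimensional, and the v-column $V\ket{j}$ is thereby determined up to a nonzero complex scalar. This shows $Z$ has positive real dimension and provides an explicit way to produce points of $Z$ to serve as witnesses.

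The main obstacle is the constructive step: for each $J$, exhibiting an explicit $V \in Z \setminus Z_J$. When $J$ consists entirely of w-column indices, linear independence is automatic for generic w-columns. In the mixed and pure-v-column cases, after writing each involved v-column (up to its chosen scalar) as a vector of signed $r \times r$ minors of its adjacent w-columns, the determinant of the submatrix indexed by $J$ becomes a polynomial in the w-column entries and the v-column scalars, and the task reduces to showing that this polynomial is not identically zero. In analogy with the column-permutation construction in the proof of Lemma~\ref{lem:hollow_unitary}, I would build a concrete witness by first taking the w-columns to realize a controlled combinatorial pattern (a near-permutation block chosen so that the relevant $r \times r$ w-column minors are nonzero and so that the distinguished entries of $V|_J$ land on a transversal), and then perturbing generically to kill any accidental cancellations; the slack afforded by $q \geq r+s$ is exactly what provides enough freedom to satisfy all orthogonality constraints while leaving the target submatrix non-singular.

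Once each $Z \cap Z_J$ is proved to be a proper subvariety of $Z$, the finite union $\bigcup_J (Z \cap Z_J)$ has measure zero in $Z$, so its complement is non-empty and any $V$ in this complement satisfies both~$(i)$ and~$(ii)$, completing the proof.
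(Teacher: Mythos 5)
Your overall skeleton matches the paper's: set up the real variety $Z$ cut out by the orthogonality equations, define a subvariety $Z_J$ for each $(r+1)$-subset $J$ of columns, show each $Z \cap Z_J$ is proper, and conclude by the finite-union-of-measure-zero argument. But the step you yourself identify as ``the main obstacle'' --- exhibiting, for \emph{every} $J$, a point of $Z$ outside $Z_J$ --- is exactly the mathematical content of the lemma, and your proposal does not actually supply it. Parameterizing $Z$ by free $w$-columns and expressing each $v$-column as the vector of signed $r\times r$ minors of its neighbours reduces the problem to showing that a rather complicated polynomial (the determinant of $V|_J$ after this substitution) is not identically zero in the $w$-entries; your plan to ``take a near-permutation pattern and perturb generically to kill accidental cancellations'' is circular at this point, since generic perturbation only helps once you already know the polynomial is not identically zero on the parameter space, and you never verify that your combinatorial pattern makes the target determinant nonzero. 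As written, the proposal is a reduction plus an unproven claim, so there is a genuine gap.

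The paper closes this gap with a much simpler witness that avoids minors and the $w$-column parameterization entirely: for a fixed $J$, assign the $r+1$ columns indexed by $J$ the $r+1$ distinct computational basis states of $\bb{C}^{r+1}$ (so that submatrix is a permutation of the identity, hence nonsingular), and then fill in the remaining $2q-r-1$ columns one at a time, each with a computational basis state. Condition $(i)$ requires any given column to be orthogonal to exactly $r$ other columns (here $q \geq r+s$ guarantees the $r$ shift values are distinct), so when you fill a column at most $r$ of its constrained partners are already fixed; since all fixed columns are basis states, orthogonality just means avoiding at most $r$ of the $r+1$ basis vectors, and a valid choice always exists. This greedy, purely combinatorial construction produces, for each $J$, an explicit element of $Z \setminus Z_J$, after which your measure-zero conclusion goes through verbatim. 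If you want to salvage your route instead, you would need to exhibit and verify a concrete $w$-column configuration for an arbitrary $J$ --- which is likely to be harder than the all-basis-state construction above.
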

\begin{proof}
For any $V \in M_{r+1,2q}$, we write
\begin{eqnarray*}
V=\left(
\begin{array}{cccc}
x_{1,1}+iy_{1,1}  & x_{1,2}+iy_{1,2}  & \cdots & x_{1,2q}+iy_{1,2q}  \\
x_{2,1}+iy_{2,1}  & x_{2,2}+iy_{2,2}  & \cdots & x_{2,2q}+iy_{2,2q}  \\
\vdots & \vdots & \ddots & \vdots\\
x_{r+1,1}+iy_{r+1,1}  & x_{r+1,2}+iy_{r+1,2}  & \cdots & x_{r+1,2q}+iy_{r+1,2q}  
\end{array}
\right)
\end{eqnarray*}
where $x_{i,j}, y_{i,j}\in \mathbb{R}$ for all $1\leq i\leq r+1, 1\leq j\leq 2q$.

Condition $(i)$ implies that
\begin{eqnarray}
\bra{j} V^{\dagger} V \ket{(j+\ell)(\text{mod } q) + q}=0
\end{eqnarray}
for all $0 \leq j < q$ and $s \leq \ell < s+r$. Equivalently, 
\begin{eqnarray}
\sum\limits_{k=1}^{r+1} (x_{k,j+1}-i y_{k,j+1})(x_{k,(j+\ell)(\text{mod } q)+q+1}+i y_{k,(j+\ell)(\text{mod } q)+q+1})=0
\end{eqnarray}
for all $0 \leq j < q$ and $s \leq \ell < s+r$.

So every matrix $V \in M_{r+1,2q}$ satisfying condition $(i)$ corresponds to a solution of the above polynomial system. Let $Z\subseteq \mathbb{R}^{4(r+1)q}$ be the real variety that is characterized by this polynomial system. We also look into the real variety $Z_{j_1j_2\cdots j_{r+1}}\subseteq {\mathbb{R}}^{4(r+1)q}$ characterized by polynomials of the $x_{i,j}$'s and $y_{i,j}$'s arising from the condition $\det(V_{j_1j_2\ldots j_{r+1}})=0$, where $V_{j_1j_2\ldots j_{r+1}}$ is the $(r+1) \times (r+1)$ matrix formed by columns $j_1,j_2,\ldots,j_{r+1}$ of $V$. 

To show that
\begin{align*}
	Z\setminus\Big( \bigcup\limits_{1\leq j_1<\cdots<j_{r+1}\leq 2q}  Z_{j_1j_2\ldots j_{r+1}}\Big)\neq \emptyset,
\end{align*}
it suffices to prove that $Z \cap Z_{j_1j_2\ldots j_{r+1}}$ is a proper subset of $Z$ for any $1\leq j_1<\cdots<j_{r+1}\leq 2q$.  To see why this is sufficient, note that if this is the case then $Z \cap Z_{j_1j_2\cdots j_{r+1}}$ has zero measure in $Z$ for all $1\leq j_1<\cdots<j_{r+1}\leq 2q$. A finite union of measure zero subsets is again a measure zero subset, which implies  $Z\cap \big(\bigcup\limits_{1\leq j_1<\cdots<j_{r+1}\leq 2q}  Z_{j_1j_2\ldots j_{r+1}}\big)$ has measure zero in $Z$. The lemma then follows straightforwardly, since a generic matrix in $M_{r+1,2q}$ that satisfies condition $(i)$ will also satisfy condition $(ii)$.

To complete the proof, we thus just need to show that, for any fixed $1\leq j_1<\cdots<j_{r+1}\leq 2q$, there exists some $V^\prime \in M_{r+1,2q}$ such that, for all $0 \leq j < q$ and $s \leq \ell < s+r$, $V^\prime\ket{j}$ is orthogonal to $V^\prime\ket{(j+\ell)(\text{mod } q) + q}$ and the $(r+1)\times(r+1)$ submatrix formed by columns $j_1,j_2,\ldots,j_{r+1}$ of $V^\prime$ is nonsingular.

To this end, we first choose the $j_1,j_2,\ldots,j_{r+1}$-th column vectors of $V^\prime$ to be the set of computational basis states in $(r+1)$-dimensional space. Thus the submatrix formed by these columns is nonsingular. We then fill the other columns with suitable computational basis states one by one. To fill the $k$-th column vector, note that it is orthogonal to $r$ other column vectors by condition~$(i)$. If $r^{\prime}$ of these $r$ column vectors have already been determined, then we can choose the $k$-th column vector to be any computational vector that is orthogonal to the $r^{\prime} \leq r$ already-determined column vectors. By repeating this procedure, we eventually fill the entire matrix $V^\prime$ so that it satisfies condition $(i)$ and satisfies condition $(ii)$ for the particular submatrix formed by columns $j_1,j_2,\ldots,j_{r+1}$.
\end{proof}

\section{Proof of Theorem~\ref{thm:tripartite_filler}}\label{sec:tripartite}

Our proof of Theorem~\ref{thm:tripartite_filler} is similar in style to that of Theorem~\ref{thm:main}. As it is already known that $f_m(2,2,4k+1) \geq f_N(2,2,4k+1) + 1$, it suffices to find a UPB of size $f_N(2,2,4k+1) + 1$. To this end, we begin by presenting a graph that leads to a product basis of the desired size, and then we use algebraic geometry techniques to show that many of these product bases are unextendible.

We begin by defining the graph $Y_{4k+4}$ on $f_N(2,2,4k+1) + 1 = 4k + 4$ vertices $V := \{v_0,\ldots,v_{2k+1},w_0,\ldots,w_{2k+1}\}$ such that $v_i$ ($w_i$) is adjacent to $v_j$ ($w_j$) if and only if $j - i \in \{1,-1\} \, (\text{mod } (2k+1))$, and $v_i$ is adjacent to $w_j$ if and only if $i = j$ (this graph is sometimes called the \emph{prism graph} on $4k+4$ vertices). In the $k = 2$ case, this graph is depicted in Figure~\ref{fig:tripartite_Y}.
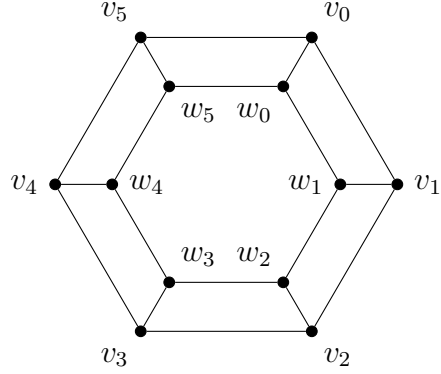
\begin{figure}[htb]
	\centering
	\begin{tikzpicture}[x=1.5cm, y=1.5cm, label distance=0cm]
		\vertex[fill] (w0) at (0.5,0.866) [label=265:$w_0$]{};
		\vertex[fill] (w1) at (1,0) [label=180:$w_1$]{};
		\vertex[fill] (w2) at (0.5,-0.866) [label=95:$w_2$]{};
		\vertex[fill] (w3) at (-0.5,-0.866) [label=85:$w_3$]{};
		\vertex[fill] (w4) at (-1,0) [label=0:$w_4$]{};
		\vertex[fill] (w5) at (-0.5,0.866) [label=275:$w_5$]{};
		
		\vertex[fill] (v0) at (0.75,1.299) [label=80:$v_0$]{};
		\vertex[fill] (v1) at (1.5,0) [label=0:$v_1$]{};
		\vertex[fill] (v2) at (0.75,-1.299) [label=280:$v_2$]{};
		\vertex[fill] (v3) at (-0.75,-1.299) [label=260:$v_3$]{};
		\vertex[fill] (v4) at (-1.5,0) [label=180:$v_4$]{};
		\vertex[fill] (v5) at (-0.75,1.299) [label=100:$v_5$]{};
			
		\path 
			(v0) edge (v1)
			(v1) edge (v2)
			(v2) edge (v3)
			(v3) edge (v4)
			(v4) edge (v5)
			(v5) edge (v0)

			(w0) edge (w1)
			(w1) edge (w2)
			(w2) edge (w3)
			(w3) edge (w4)
			(w4) edge (w5)
			(w5) edge (w0)
			
			(v0) edge (w0)
			(v1) edge (w1)
			(v2) edge (w2)
			(v3) edge (w3)
			(v4) edge (w4)
			(v5) edge (w5)
		;
	\end{tikzpicture}
	\caption{The graph $Y_{12}$ used in the construction of a minimal UPB in $\bb{C}^2 \otimes \bb{C}^2 \otimes \bb{C}^9$.}\label{fig:tripartite_Y}
\end{figure}

We now construct a set of vectors in $\bb{C}^2 \otimes \bb{C}^2$ whose orthogonality graph is $Y_{4k+4}$. To this end, we follow the notation of \cite{Fen06} and let $\{\ket{b_j},\ket{b_j^\perp}\}_{j=0}^{2k+1}$ be distinct orthonormal bases of $\bb{C}^2$ (i.e., $\braket{b_j}{b_j^\perp} = 0$ for all $j$, but $|\braket{b_i}{b_j}|,|\braket{b_i}{b_j^\perp}|,|\braket{b_i^\perp}{b_j^\perp}| \notin \{0,1\}$ whenever $i \neq j$). We then define the vectors $\ket{v_j^{(1,2)}},\ket{w_j^{(1,2)}} \in \bb{C}^2 \otimes \bb{C}^2$ as follows:
\begin{align*}
	\ket{v_{2j}^{(1,2)}} & := \ket{b_j} \otimes \ket{b_{2j}^\perp} & \ket{v_{2j+1}^{(1,2)}} & := \ket{b_j^\perp} \otimes \ket{b_{(2j+2) \, (\text{mod} (2k+2))}} \\
	\ket{w_{2j}^{(1,2)}} & := \ket{b_j^\perp} \otimes \ket{b_{2j+1}^\perp} & \ket{w_{2j+1}^{(1,2)}} & := \ket{b_j} \otimes \ket{b_{(2j+3) \, (\text{mod} (2k+2))}},
\end{align*}
for all $0 \leq j \leq k$. It is straightforward to verify that the set of vectors
\begin{align*}
	S^{(1,2)} := \big\{\ket{v_0^{(1,2)}},\ldots,\ket{v_{2k+1}^{(1,2)}},\ket{w_0^{(1,2)}},\ldots,\ket{w_{2k+1}^{(1,2)}}\big\}
\end{align*}
has orthogonality graph $Y_{4k+4}$. Furthermore, any product vector $\ket{z} \in \bb{C}^2 \otimes \bb{C}^2$ can be orthogonal to at most $2$ elements of $S^{(1,2)}$ on the first subsystem, and at most $1$ element of $S^{(1,2)}$ on the second subsystem. Thus any nonzero product vector can be orthogonal to at most $3$ elements of $S^{(1,2)}$.

We now consider the complement of the graph $Y_{4k+4}$, which we denote $X_{4k+4}$. This is a graph on the same set of $4k+4$ vertices $V$, but $(v_i,v_j)$ and $(w_i,w_j)$ are edges in $X_{4k+4}$ if and only if $j - i \notin \{1,-1\} \, (\text{mod } (2k+1))$, and $(v_i,w_j)$ is an edge in $X_{4k+4}$ if and only if $i \neq j$. Our goal is to show that there exists a set
\begin{align*}
	S^{(3)} = \big\{ \ket{v_0^{(3)}}, \ldots, \ket{v_{2k+1}^{(3)}}, \ket{w_0^{(3)}}, \ldots, \ket{w_{2k+1}^{(3)}} \big\} \subset \bb{C}^{4k+1}
\end{align*}
of $4k+4$ vectors with orthogonality graph $X_{4k+4}$. To this end, we let $S^{(3)}$ be the columns of the matrix $W$ described by Lemma~\ref{lem:tripartite_matrix}. It is clear from conditions~$(i)$, $(ii)$ and $(iii)$ of the lemma that these vectors have the desired orthogonality graph. Furthermore, condition~$(iv)$ guarantees that any nonzero state $\ket{z} \in \bb{C}^{4k+1}$ can be orthogonal to at most $4k$ vectors in $S^{(3)}$.

Because $X_{4k + 4} \cup Y_{4k + 4} = K_{4k+4}$, the complete graph on $4k+4$ vertices, the set of vectors
\begin{align*}
	S := \big\{\ket{v_0^{(1,2)}} \otimes \ket{v_0^{(3)}},\ldots,\ket{v_{2k+1}^{(1,2)}} \otimes \ket{v_{2k+1}^{(3)}},\ket{w_0^{(1,2)}} \otimes \ket{w_{0}^{(3)}},\ldots,\ket{w_{2k+1}^{(1,2)}} \otimes \ket{w_{2k+1}^{(3)}}\big\}
\end{align*}
is a product basis of $\bb{C}^2 \otimes \bb{C}^2 \otimes \bb{C}^{4k+1}$. To see that $S$ is unextendible, simply recall that any product state can be orthogonal to at most $3$ elements of $S$ on the first two subsystems and at most $4k$ elements of $S$ on the third subsystem, for a total of at most $4k + 3$ states total. Thus no nonzero product state can be orthogonal to all $4k+4$ elements of $S$, so unextendibility follows and the proof is complete.

We finish this section with the statement and proof of Lemma~\ref{lem:tripartite_matrix}, which played a key role in the above proof of Theorem~\ref{thm:tripartite_filler}.

\begin{lemma}\label{lem:tripartite_matrix}
	Let $k \geq 1$. There exist matrices $W_{(v)},W_{(w)} \in M_{4k+1,2k+2}$ such that if we define the block matrix $W := [W_{(v)}, W_{(w)}] \in M_{4k+1,4k+4}$ then:
	\begin{enumerate}[(i)]
		\item $W_{(v)}\ket{i}$ is orthogonal to $W_{(v)}\ket{j}$ whenever $j - i \notin \{0,1,2k+1\} \, (\text{mod } (2k+2))$;
		\item $W_{(w)}\ket{i}$ is orthogonal to $W_{(w)}\ket{j}$ whenever $j - i \notin \{0,1,2k+1\} \, (\text{mod } (2k+2))$;
		\item $W_{(v)}\ket{i}$ is orthogonal to $W_{(w)}\ket{j}$ whenever $j \neq i$; and
		\item every $(4k+1)\times (4k+1)$ submatrix of $W$ is nonsingular.
	\end{enumerate}
\end{lemma}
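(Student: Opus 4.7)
The plan is to adapt the algebraic-geometry framework from the proofs of Lemmas~\ref{lem:hollow_unitary} and~\ref{lem:cp_orthogonality}. Parametrize $W \in M_{4k+1,4k+4}$ by the $2(4k+1)(4k+4)$ real variables $x_{p,q}, y_{p,q}$ giving the real and imaginary parts of its entries. Each of the orthogonality requirements (i), (ii), (iii) becomes a real bilinear polynomial equation in these variables, and we let $Z \subseteq \mathbb{R}^{2(4k+1)(4k+4)}$ denote the real variety they cut out. For each $(4k+1)$-element subset $J \subseteq \{1,\ldots,4k+4\}$, the singularity of the submatrix of $W$ on the columns indexed by $J$ is described by the vanishing of the corresponding determinant polynomial, cutting out a subvariety $Z_J$. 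The goal becomes to show $Z \setminus \bigcup_J Z_J \neq \emptyset$.

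Just as in the earlier lemmas, it suffices to verify that $Z \cap Z_J$ is a proper subvariety of $Z$ for every choice of $J$: a finite union of measure-zero subsets of $Z$ still has measure zero, so the complement is nonempty, and a generic element of $Z$ will then simultaneously satisfy every instance of condition (iv). The task therefore reduces, for each fixed $J$, to producing a single matrix $W^* \in Z$ whose submatrix on columns $J$ is nonsingular.

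The main step is this construction. Assign the $4k+1$ columns of $W^*$ indexed by $J$ to be the $4k+1$ distinct standard basis vectors of $\bb{C}^{4k+1}$, so that the $J$-submatrix is the identity and hence nonsingular; because distinct standard basis vectors are orthogonal, conditions (i)--(iii) restricted to columns in $J$ are automatic. It remains to fill in the three columns indexed by $J^c = \{a,b,c\}$. For each $x \in \{a,b,c\}$, conditions (i)--(iii) force $c_x$ to be orthogonal to every $J$-column except those indexed by the neighbors of $x$ in the prism graph $Y_{4k+4}$. Since $Y_{4k+4}$ is $3$-regular and the $J$-columns are distinct standard basis vectors, this pins $c_x$ down to a coordinate subspace $V_x \subseteq \bb{C}^{4k+1}$ of dimension between $1$ and $3$, spanned by the basis vectors assigned to the neighbors of $x$ in $Y_{4k+4}$ that happen to lie in $J$.

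The main obstacle is verifying that nonzero choices of $c_a, c_b, c_c$ can always be made to respect the orthogonality constraints required among the three leftover columns. Since $Y_{4k+4}$ is triangle-free, the induced subgraph on $\{a,b,c\}$ has at most two edges, leaving between one and three pairs from $\{c_a, c_b, c_c\}$ that must be mutually orthogonal. For any such pair $c_x \perp c_y$, the inner product $\langle c_x | c_y \rangle$ depends only on the coordinates of $c_x$ and $c_y$ supported on $V_x \cap V_y$, a coordinate subspace spanned by the basis vectors assigned to common $Y_{4k+4}$-neighbors of $x$ and $y$ that lie in $J$. A brief case analysis, organized by the induced subgraph of $Y_{4k+4}$ on $\{a,b,c\}$ and the dimensions of the intersections $V_x \cap V_y$, shows that one can always set to zero the relevant coefficients on the shared basis vectors in enough of the three columns to enforce every required orthogonality while still leaving each $c_x$ nonzero. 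This yields $W^* \in Z \setminus Z_J$ for every $J$, and the lemma follows.
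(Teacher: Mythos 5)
Your proposal is correct and takes essentially the same route as the paper's proof: reduce, via the measure-zero/genericity argument, to exhibiting for each column set $J$ a single matrix satisfying the orthogonality conditions whose $J$-submatrix is nonsingular, and realize that submatrix as the $4k+1$ standard basis vectors. The only difference is in the three leftover columns: the paper simply fills them one by one with computational basis states (each column has exactly $4k = d-1$ orthogonality partners, so an unused basis state is always available), which makes your case analysis over the induced subgraph of $Y_{4k+4}$ unnecessary -- your version does go through (each leftover column $c_x$ faces exactly $\dim V_x - 1$ constraints from the other leftover columns, so a nonzero choice always exists), but it is more work than needed, and it would be better to record that dimension count explicitly rather than assert the case analysis.
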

\begin{proof}
Define $d := 4k + 1$ for simplicity. Similar to the proof of Lemma~\ref{lem:cp_orthogonality}, to show the existence of $W$, it suffices to prove the following claim:

For any $1 \leq j_1 < \cdots < j_d \leq d+3$, there exists some $W^\prime \in M_{d,d+3}$ that satisfies the orthogonality conditions $(i)$, $(ii)$ and $(iii)$, and the $d \times d$ submatrix formed by columns $j_1,j_2,\ldots,j_{d}$ is nonsingular.

We first choose the $j_1,j_2,\ldots,j_{d}$-th column vectors to be the set of computational basis states in $d$-dimensional space. The submatrix formed by these columns is clearly nonsingular. We then fill other column vectors with suitable computational basis states one by one. To fill the $k$-th column vector, note that conditions $(i)$, $(ii)$ and $(iii)$ force it to be orthogonal to $(d-1)$ other columns of $W^\prime$. If we assume that $r$ of these $(d-1)$ columns have already been specified, then we can choose the $k$-th column vector to be any computational basis state that is orthogonal to the $r \leq d-1$ already-specified columns. By repeating this procedure, we eventually fill the entire matrix $W^\prime$ so that it satisfies conditions $(i)$, $(ii)$, and $(iii)$, and satisfies condition $(iv)$ for the particular submatrix formed by columns $j_1,j_2,\ldots,j_{d}$.
\end{proof}

\section{Explicit Construction of UPBs}\label{sec:explicit}

As the proofs of Lemmas~\ref{lem:hollow_unitary}, \ref{lem:cp_orthogonality}, and \ref{lem:tripartite_matrix} are non-constructive, it is perhaps not immediately clear how to produce explicit UPBs of the size indicated by Theorems~\ref{thm:main} and~\ref{thm:tripartite_filler}, even though we know they exist. We now address this problem and demonstrate how to construct explicit UPBs of the desired size in small dimensions. Code that implements the techniques described in this section, and thus constructs minimal UPBs, can be downloaded from \cite{JohUPBCode}.

For Lemma~\ref{lem:cp_orthogonality}, we recall that a generic matrix satisfying condition~$(i)$ of the lemma will also satisfy condition~$(ii)$. Thus, one way to construct matrices satisfying both requirements is to randomly generate its first $q$ columns, then generate its last $q$ columns according to the orthogonality requirement $(i)$, and finally check to make sure that the resulting matrix satisfies condition $(ii)$. All of these steps are straightforward to implement numerically in software such as MATLAB. For example, the following matrix $W_{5,3,2}$ satisfies the conditions of the lemma in the $q = 5$, $r = 3$, $s = 2$ case:
\begin{align*}
	W_{5,3,2} = \begin{bmatrix}[r]
     3 &  3 &  3 &  1 &  1 & -1 & -1 &  2 & -2 &  0 \\
     2 &  1 &  1 &  2 &  3 & -2 &  0 &  0 & -2 & -1 \\
     2 &  1 &  1 &  1 &  1 &  5 & -3 & -4 &  2 &  1 \\
     2 &  2 &  3 &  2 &  2 &  0 &  2 &  1 &  3 &  0
	\end{bmatrix}
\end{align*}

The procedure for Lemma~\ref{lem:tripartite_matrix} is similar -- just randomly generate columns that satisfy conditions~$(i)$, $(ii)$ and $(iii)$, and then with probability 1 the resulting matrix will also satisfy condition~$(iv)$. More specifically, randomly generate the first two columns of the matrix, then generate the remaining columns according to the orthogonality conditions, and finally check to make sure that condition $(iv)$ is satisfied. As before, this procedure is simple to perform numerically. The following matrix $W_2$ is an explicit example in the $k = 2$ (i.e., $d = 9$) case:
\begin{align*}
	W_{2} = \begin{bmatrix}[r]
		 2 &   1 &   1 &   5 &   1 &   2 &   0 &   0 &   1 &   0 &   6 &   6 \\
		 2 &  -1 &  -1 &  -3 &   1 &   2 &   0 &   0 &  -1 &   0 &   6 &   6 \\
		 1 &   2 &   0 &  -4 &   4 &   1 &   0 &   0 &   2 &   1 &   3 &   3 \\
		 2 &   2 &   0 &   0 &  -4 &  -1 &   0 &   0 &   2 &   1 &  27 &   6 \\
		 2 &   2 &   0 &   0 &   0 &   0 &  -1 &  -1 &  -4 &   0 &   0 &  -4 \\
		 1 &   2 &   0 &   0 &   0 &   0 &   1 &   2 &  -4 &   0 &   0 &  -2 \\
		 2 &   2 &   0 &   0 &   0 &   0 &   0 &   0 &   3 &  -1 & -20 & -14 \\
		 1 &   2 &   0 &   0 &   0 &   0 &   0 &   0 &   0 &  -1 &  21 &  23 \\
		 2 &   2 &   0 &   0 &   0 &   0 &   0 &   0 &   0 &   0 & -31 & -12
	\end{bmatrix}
\end{align*}

Unfortunately, constructing matrices that satisfy the constraints of Lemma~\ref{lem:hollow_unitary} seems to be a bit more difficult in practice. We can begin by specifying $(d-b)(d-b-1)/2$ of the non-zero entries and then using the orthogonality condition~$(i)$ to solve for the remaining non-zero entries. In general, however, this leads to a system of $(d-b)(d-b-1)/2$ linear and quadratic equations in $(d-b)(d-b-1)/2$ variables, which quickly becomes infeasible to solve as $d$ grows. Furthermore, the solution will generally be significantly messier than we saw when dealing with Lemmas~\ref{lem:cp_orthogonality} and~\ref{lem:tripartite_matrix}. Nonetheless, such a system of equations can be solved (even by hand) when $d$ is small enough. For example, in the $d = 6$, $b = 1$ case, if we set
\begin{align*}
	U_{6,1} = \begin{bmatrix}
		0 & u_1 & 1 & u_2 & 0 \\
		0 & 0 & u_3 & 1 & u_4 \\
		u_5 & 0 & 0 & u_6 & 1 \\
		1 & u_7 & 0 & 0 & 1 \\
		u_8 & 1 & u_9 & 0 & 0 \\
		1 & 1 & 1 & 1 & u_{10} \\
	\end{bmatrix}
\end{align*}
and then solve for the variables $\{u_j\}_{j=1}^{10}$, we find that the unique real solution is obtained when $u_{10}$ is the unique real root (approximately equal to $1.6445$) of the polynomial $3u_{10}^3 - 2u_{10}^2 - 3u_{10} - 3$, and the remaining $u_j$'s are given by
\begin{align*}
	u_1 & = (u_{10}-2)/(1-u_{10}) & u_4 & = u_{10}(u_{10}-2)/(2u_{10}-3) & u_7 & = -u_{10} \\
	u_2 & = (u_{10}-1)/(u_{10}-2) & u_5 & = -1 - u_{10} & u_8 & = u_{10}-1 \\
	u_3 & = (3-2u_{10})/(u_{10}-2) & u_6 & = 1/(1 + u_{10}) & u_9 & = 1/(1-u_{10}).
\end{align*}
It is a simple calculation in MATLAB to verify that $U_{6,1}$ also satisfies conditions~$(iii)$ and~$(iv)$ of Lemma~\ref{lem:hollow_unitary}, and thus leads to an unextendible product basis.

In the $b = 0$ case, however, we can simplify things a bit further by analytically constructing a matrix $U_{d,0}$ that satisfies conditions~$(i)$ and $(ii)$ of the lemma. We start by building the eigendecomposition of $U_{d,0}$ and then we argue that it must satisfy the condition~$(ii)$.

Let the first two eigenvalues of $U_{d,0}$ be $\lambda_1 := 1$ and $\lambda_2 := -1$, with respective eigenvectors
	\begin{align*}
		\ket{v_1} := \left[\tfrac{1}{\sqrt{2}}, \tfrac{1}{\sqrt{2d-2}}, \ldots, \tfrac{1}{\sqrt{2d-2}} \right]^T \text{ and } \ket{v_2} := \left[\tfrac{1}{\sqrt{2}}, \tfrac{-1}{\sqrt{2d-2}}, \ldots, \tfrac{-1}{\sqrt{2d-2}} \right]^T.
	\end{align*}
	It is easily-verified that $\ket{v_1}$ and $\ket{v_2}$ are orthogonal, and a simple calculation reveals that
	\begin{align*}
		\lambda_1\ketbra{v_1}{v_1} + \lambda_2\ketbra{v_2}{v_2} = \begin{bmatrix}0 & \tfrac{1}{\sqrt{d-1}} & \tfrac{1}{\sqrt{d-1}} & \cdots & \tfrac{1}{\sqrt{d-1}} \\ \tfrac{1}{\sqrt{d-1}} & 0 & 0 & \cdots & 0 \\ \tfrac{1}{\sqrt{d-1}} & 0 & 0 & \cdots & 0 \\ \vdots & \vdots & \vdots & \ddots & \vdots \\ \tfrac{1}{\sqrt{d-1}} & 0 & 0 & \cdots & 0 \end{bmatrix}.
	\end{align*}
	
	Now let $\omega_k := e^{2\pi i/k}$ be a primitive $k$-th root of unity. Let the remaining $d-2$ eigenvalues of $U_{d,0}$ be the $(d-2)$-th roots of unity: $\lambda_{j+2} := \omega_{d-2}^j$ for $j = 1, 2, \ldots, d-2$. Define the corresponding eigenvectors as follows:
	\begin{align*}
		\ket{v_{j+2}} := \tfrac{1}{\sqrt{d-1}}\left[0, \omega_{d-1}^0, \omega_{d-1}^j, \omega_{d-1}^{2j}, \ldots, \omega_{d-1}^{j(d-2)} \right]^T.
	\end{align*}
	In other words, $\big[\ket{v_{3}},\ket{v_{4}},\ldots,\ket{v_{d}}\big]$ is the $(d-1)\times(d-1)$ Fourier matrix, with its leftmost column removed and a row of zeroes added to the top. The fact that $\{\ket{v_j}\}_{j=1}^d$ forms an orthonormal basis of $\bb{C}^d$ comes from the fact that the $(d-1)\times(d-1)$ Fourier matrix is unitary. It follows that $U_{d,0} := \sum_{j=1}^d \lambda_j \ketbra{v_j}{v_j}$ is unitary as well. Furthermore, a straightforward (albeit tedious) calculation reveals that if $k,\ell \geq 1$ then
	\begin{align*}
		\bra{k}U_{d,0}\ket{\ell} & = \frac{1}{d-1}\sum_{j=1}^{d-2} \lambda_{j+2} \omega_{d-1}^{(k-2)j}\omega_{d-1}^{-(\ell-2)j} \\
		& = \frac{1}{d-1}\sum_{j=1}^{d-2} \omega_{d-2}^j \omega_{d-1}^{(k-\ell)j} \\
		& = \frac{1}{d-1}\sum_{j=1}^{d-2} \omega_{(d-1)(d-2)}^{j(d-1+(k-\ell)(d-2))} \\
		& = \frac{\omega_{(d-1)(d-2)}^{(d-2)(d-1+(k-\ell)(d-2))}-1}{(d-1)(1-\omega_{(d-1)(d-2)}^{1-d-(k-\ell)(d-2)})},
	\end{align*}
	where we summed the geometric series in the final line above. This quantity equals zero if and only if $(d-2)(d-1+(k-\ell)(d-2))$ is a multiple of $(d-1)(d-2)$ (i.e., if and only if $(k-\ell)(d-2)$ is a multiple of $d-1$). However, because $0 \leq |k-\ell| < d-1$ and $d-2$ is coprime to $d-1$, this happens if and only if $k = \ell$. This verifies that $U_{d,0}$ indeed has zeroes down its diagonal, as desired, and also shows that it is a candidate to satisfy condition~$(iii)$ of the lemma (since in the $b = 0$ case, any matrix satisfying condition~$(iii)$ must have all of its off-diagonal entries be non-zero). We have numerically verified that $U_{d,0}$ satisfies condition~$(iii)$ (and it trivially satisfies condition~$(iv)$) for $4 \leq d \leq 19$ in MATLAB.
	
	By using all of these techniques, together with the techniques used in the proof of Theorem~\ref{thm:main}, we can explicitly construct UPBs in all of the dimensions we have discussed. For example, a minimal UPB of size $10$ in $\bb{C}^4 \otimes \bb{C}^6$ is given by the set
	\begin{align*}
		\big\{ \ket{v_0^{(1)}} \otimes \ket{v_0^{(2)}}, \ldots, \ket{v_4^{(1)}} \otimes \ket{v_4^{(2)}}, \ket{w_0^{(1)}} \otimes \ket{w_0^{(2)}}, \ldots, \ket{w_4^{(1)}} \otimes \ket{w_4^{(2)}}\big\},
	\end{align*}
	where:
	\begin{itemize}
		\item the $\ket{v_j^{(1)}}$'s and $\ket{w_j^{(1)}}$'s are (in order) the normalized columns of $W_{5,3,2}$ above;
		\item $\ket{v_j^{(2)}} = \ket{j} \in \bb{C}^6$ for $0 \leq j \leq 4$; and
		\item $\ket{w_j^{(2)}}$ is the $(j+1)$-th column of $U_{6,1}$ above.
	\end{itemize}

\section{Outlook}\label{sec:conclusions}

We have shown that, in many cases, the minimum size of a UPB does not exceed the trivial lower bound by more than $1$. In fact, there is currently no known case in which $f_m(d_1,\ldots,d_p) > f_N(d_1,\ldots,d_p) + 1$. It could be the case that this never happens, or it could be the case that we aren't aware of any such cases yet because it is very difficult to prove non-trivial lower bounds on $f_m(d_1,\ldots,d_p)$.

Some particularly interesting cases of the minimal UPB question that remain open are:
\begin{enumerate}[(1)]
	\item $d_1 = d_2 = 2$, $d_3 = 4k - 1$: It was shown in \cite{Fen06} that $f_m(2,2,4k-1) = f_N(2,2,4k-1) + 1$ when $k = 1$, but the proof technique does not seem to generalize straightforwardly to the $k \geq 2$ case.
	
	\item $p = 4k$ and $d_1 = \cdots = d_p = 2$: All other cases with $d_1 = \cdots = d_p = 2$ have been solved -- it is known that $f_m(2,\ldots,2) = f_N(2,\ldots,2)$ if $p$ is odd and $f_m(2,\ldots,2) = f_N(2,\ldots,2) + 1$ if $p \equiv 2 \, (\text{mod } 4)$. Furthermore, it is known that $f_m(2,2,2,2) = f_N(2,2,2,2) + 1$, but again the proof technique does not obviously generalize to the $k \geq 2$ case.
	
	\item $d_1 = 3$, $d_2 = d_3 = 4$: Excluding the open case (1) above, this is now the smallest unsolved tripartite case.
\end{enumerate}

Finally, it was noted in \cite{AL01} that whenever $f_m(d_1,\ldots,d_p) = f_N(d_1,\ldots,d_p)$, a minimal UPB can be constructed using only real vectors. The same is true in all cases in which we have proved $f_m(d_1,\ldots,d_p) = f_N(d_1,\ldots,d_p) + 1$. Restricting to the real field cuts the number of variables in half and also cuts the number of polynomial constraints in half, but has no significant effect on any of our proofs.

\vspace{0.1in} \noindent{\bf Acknowledgements.} Thanks are extended to John Watrous for a helpful discussion. J.C. was supported by NSERC, UTS-AMSS Joint Research Laboratory for Quantum Computation and Quantum Information Processing and NSF of China (Grant No. 61179030). N.J. was supported by an NSERC Postdoctoral Fellowship. 

\bibliographystyle{alpha}
\bibliography{quantum}
\end{document}